\documentclass[12pt]{article}
\usepackage{fullpage,setspace}
\usepackage{amssymb, amsthm, amsmath}
\usepackage{amsmath, amssymb}
\usepackage{graphicx}
\usepackage[authoryear]{natbib}
\usepackage{bm}
\usepackage{algorithm,algpseudocode}
\usepackage{hyperref}
\usepackage{mathabx}
\usepackage{mathtools}
\usepackage{float}
\usepackage{url}
\usepackage{color}

\usepackage{graphicx}
\usepackage{lineno}
\usepackage{times}
\usepackage{algorithm,algpseudocode}
\usepackage{hyperref}
\usepackage{multirow}
\usepackage{multirow}
\usepackage{booktabs}
\usepackage{cleveref}
\usepackage{makecell}

\usepackage[T1,OT1]{fontenc}

\usepackage{caption}
\newtheorem{definition}{Definition}
\newtheorem{theorem}{Theorem}
\newtheorem{lemma}{Lemma}

\usepackage{fancyhdr}
\date{}
\usepackage{authblk}

\title{Hypothesis Testing for Nonlinear and Interactive Effects Using Additive Gaussian Processes}

\author[1]{Nate Wiecha\thanks{Corresponding Author; email: nwiecha@unc.edu}}
\author[2]{Cailee E. Harrington}
\author[1]{Ameer Qaqish}
\author[1]{Haibo Zhou}
\author[2]{Marc Serre}
\author[2]{Lauren A. Eaves}
\author[2]{Rebecca Fry}
\author[1]{Didong Li}

\affil[1]{Department of Biostatistics, University of North Carolina at Chapel Hill}
\affil[2]{Department of Environmental Sciences and Engineering, University of North Carolina at Chapel Hill}
\begin{document}

\label{firstpage}

\maketitle

\begin{abstract}
\setstretch{1.2}

Real-world data often exhibit nonlinear, interactive associations between predictors and an outcome. For example, in public health studies, environmental exposures and demographic variables can have nonlinear associations with health outcomes and modify the associations of other predictors. Parametric regression models facilitate interpretable inference and hypothesis testing with Type I error rate control, but may be misspecified when associations are nonlinear. In contrast, flexible nonparametric regression models such as Gaussian processes (GPs) can capture complex nonlinearity, but are often more difficult to interpret and do not readily provide hypothesis tests with Type I error rate control. To bridge this gap, we propose a hypothesis testing framework based on additive GPs restricted to pairwise interactions, allowing tests of variable associations and interaction effects with Type I error rate control. We further establish contraction rates for additive GPs, providing theoretical support for their reduced dependence on predictor dimension when higher-order interactions are absent. Through simulations, we show that restricting the model to pairwise interactions can substantially improve testing power relative to methods based on standard GPs. Applications to environmental health studies identify nonlinear associations between metal exposures and health outcomes as well as effect modification by demographic factors.

\noindent\textbf{Keywords:} Interaction, additive, Gaussian process, hypothesis testing, contraction rate, nonparametric regression.
\end{abstract}

\setstretch{1.2}
\clearpage
\section{Introduction}\label{s:intro}


Hypothesis testing is fundamental to statistical inference, providing a principled framework for assessing evidence for associations between predictors and an outcome. Parametric regression models provide a natural framework for such testing, supported by a well-developed literature on statistical inference, interpretable model parameters, and hypothesis tests with Type I error rate control~\citep{casella2024statistical, boos2013essential, agresti2015foundations}. However, relationships in real-world data can be substantially more complex than those specified by standard parametric models. For example, in public health studies, environmental exposures, nutritional intakes, and demographic factors may have nonlinear associations with a health outcome and modify the associations of other exposures~\citep{bobb2015bayesian, liu2022cross, ngiangi_bakwin2001semiparametric,Morgensterne037860}. When the assumed functional form is misspecified, parametric tests can lose power or provide misleading inference. Flexible, nonparametric regression methods, including smoothing splines and Gaussian processes (GPs), provide an alternative by allowing complex nonlinear relationships to be learned from the data \citep{wood2017generalized, bobb2015bayesian}.

While nonparametric regression models make fewer assumptions about the functional form of associations, formal hypothesis tests of variable-level effects with Type I error rate control are considerably less straightforward than in parametric regression. Many approaches instead quantify variable importance, allowing researchers to rank predictors or characterize their associations with the outcome. Examples include Sobol indices \citep{sobol1990sensitivity}, which has been used with GP models \citep[e.g.][]{lu2022oak}; broader overviews of variable importance measures are given in \cite{molnar2025}and  \cite{hastie2009elements}. However, measures of variable importance do not provide formal evidence for whether a predictor or interaction is associated with the outcomes. This distinction is particularly important in public health studies, where the scientific question is often not which exposure is more important, but whether there is sufficient evidence that a particular exposure is associated with a health outcome or that one exposure modifies the effect of another. As a result, formal hypothesis testing provides an important complement to variable-importance measures.

A general nonparametric regression model for a Gaussian outcome $y$ can be written as
\begin{align}
    y_i = f_0(\mathbf{x}_i) + \epsilon_i,~~
    \epsilon_i \sim \text{ i.i.d. } N(0, \sigma^2),\label{mod:np}
\end{align}
where $f_0$ is an unknown function and $\mathbf{x}_i\in\mathbb{R}^p$ is a vector of predictors.  $f_0$ can be modeled using flexible approaches such as GP regression \citep{williams2006gaussian}, smoothing splines \citep{wood2017generalized}, or other machine learning models such as random forests or deep neural networks \citep{breiman2001random, hastie2009elements}. Among these approaches, GP regression is particularly amenable to hypothesis testing because hypotheses about $f_0$ can be formulated through parameters of its covariance kernel. Consequently, several GP-based methods have been developed for assessing associations and interactions, particularly in public health applications. For instance, \citet{maity2011powerful} proposed garrote kernel machines (GKM) for testing association between genetic predictors and a health outcome using  GPs, which can also be referred to as kernel machine regression models. \cite{bobb2015bayesian} proposed Bayesian Kernel Machine Regression (BKMR), a Bayesian alternative for estimating exposure-response relationships, which has seen wide use. \cite{liu2022cross} proposed a cross-validated ensemble of kernels (CVEK), extending GKM to test for nonlinear interactions between groups of variables. 

Existing methods have several drawbacks. GKM focuses on a standard GP regression model, which allows arbitrarily high orders of interaction between all $p$ predictors. Consequently, this model suffers from the curse of dimensionality, as the difficulty of estimating high-order interactions increases exponentially with $p$. Additionally, GKM turns out to be a test of {\it linear} association, due to a subtle issue with the hypothesis test (see Supplemental Material Section \ref{s:supp_gkm_lin} for details). BKMR  uses Bayesian variable selection with GP regression and also allows high-order interactions, increasing model complexity when such interactions are not necessary. Moreover, its Bayesian variable selection procedure does not provide frequentist hypothesis tests with Type I error rate control. The grouped variable selection, designed to alleviate effects of collinearity, can also provide misleading results \citep{wiecha2026novel}. CVEK focuses specifically on testing nonlinear interactions between groups of variables and therefore does not provide a general framework for testing variable importance.

A central issue underlying these limitations is that the general nonparametric model (\cref{mod:np}) can be overly flexible when $p$ is moderately large. While flexibility is needed to capture nonlinear interactions, an unrestricted $f_0$ allows interactions of arbitrarily high order, which may be unnecessary and can lead to the curse of dimensionality and more difficult interpretation. This motivates more structured nonparametric models that impose reasonable restrictions on $f_0$ while retaining flexibility. In particular, \cite{duvenaud2011additivegaussianprocesses}, and shortly afterward \cite{durrande2012additive}, proposed additive GP regression, which includes as a special case the model we consider:
\begin{align}\label{mod:pairwise_np}
    y_i = \sum_{j} f_j(x_{ij}) + \sum_{j<k}f_{j,k}(x_{ij}, x_{ik}) + \epsilon_i,~~\epsilon_i\sim \text{ i.i.d. } N(0, \sigma^2),
\end{align}
for $j, k=1, ..., p$, which the regression function $f_0$ is decomposed into nonlinear main effects and pairwise interaction effects, with each component modeled using a GP. Such a restriction can improve predictive performance when higher-order interactions are absent \citep{duvenaud2011additivegaussianprocesses} and also facilitates interpretation, since continuous nonlinear interactions involving many predictors are inherently difficult to interpret. In fact, \cite{lu2022oak} found additive GPs with second-order interaction can usually provide comparable predictions to models including higher-order terms. More importantly for hypothesis testing, reducing unnecessary model complexity can lead to greater power when the true regression function contains only low-order interactions. Building on this structured GP model, we propose hypothesis tests for associations between individual predictors or groups of predictors and the outcome, as well as for pairwise interaction effects.

Our contributions are fourfold. First, we develop a general hypothesis testing framework based on the pairwise additive model (\cref{mod:pairwise_np}) for testing associations between individual predictors or groups of predictors and an outcome, as well as specific or overall pairwise interaction effects. Second, we establish contraction rates for additive GPs and show that, under appropriate conditions, the rate depends on the dimension of the individual additive components rather than the total number of predictors, thereby alleviating the curse of dimensionality. Third, through simulations, we show that the proposed tests can achieve substantially higher power than tests based on standard GPs when higher-order interactions are absent. Finally, we demonstrate the scientific utility of the proposed framework through substantive public health applications. Using private-well measurements from 1,693 North Carolina census tracts, we find that inorganic arsenic concentrations are associated with obesity prevalence, and the association may be modified by income, and in NHANES, we identify nonlinear associations between metal exposures and liver injury and evidence that the association between manganese and liver health may differ by sex.

\section{Background: Gaussian Process (GP)}

We review GP regression, which is the basis of the aforementioned existing methods, including GKM, BKMR, and CVEK, as well as our proposed method. A GP is a stochastic process whose finite realizations follow multivariate Gaussian distributions.

\begin{definition}
A stochastic process $W$ over $\mathcal{X}$ follows a GP, denoted by \(W\sim GP(m,K)\), if, for any $\mathbf{x}_1,\ldots,\mathbf{x}_n\in\mathcal{X}$,
\(
\left[W(\mathbf{x}_1),\ldots,W(\mathbf{x}_n)\right]^\top
\sim N(\boldsymbol{\mu},\boldsymbol{\Sigma}),
\)
where $\mu_i=m(\mathbf{x}_i)$ and $\Sigma_{ij}=K(\mathbf{x}_i,\mathbf{x}_j)$. Here, $m$ is called the mean function and $K$ is called the covariance or kernel function.
\end{definition}

Without loss of generality, we assume the mean function is zero throughout this paper. The covariance function is typically chosen from a parametric family $K_{\boldsymbol{\theta}}$, where $\boldsymbol{\theta}$ denotes the covariance parameters. Letting $d=\|\mathbf{x}-\mathbf{x}'\|_2$, two commonly used covariance functions are the squared exponential covariance
\(
K_{\boldsymbol{\theta}}(d)
=
\lambda\exp\left(-\frac{1}{\rho}d^2\right),~
\boldsymbol{\theta}=(\lambda,\rho),
\)
and the Mat\'ern covariance
\(
K_{\boldsymbol{\theta}}(d)
=
\lambda\frac{2^{1-\alpha}}{\Gamma(\alpha)}
\left(
\sqrt{2\alpha}\frac{d}{\rho}
\right)^\alpha
F_\alpha\left(
\sqrt{2\alpha}\frac{d}{\rho}
\right),~\boldsymbol{\theta}=(\lambda,\rho,\alpha),
\)
where $F_\alpha$ is the modified Bessel function of the second kind. Different covariance functions impose different assumptions on the regularity of $f_0$ \citep{kanagawa2018gaussianprocesseskernelmethods}. For the Mat\'ern covariance, $\alpha$ controls the smoothness, or differentiability, of GP sample paths, while the lengthscale $\rho$ controls the sensitivity of $f_0$ to changes in the inputs. Larger values of $\rho$ correspond to smoother estimates, while smaller values result in more ``wiggly'' estimates \citep{williams2006gaussian}.

For the nonparametric regression model in (\ref{mod:np}), GP regression treats the fixed target function $f_0$ as a realization from a GP and assumes $f_0\sim GP(0,K_{\boldsymbol{\theta}}).$ Given observations $(\mathbf{x}_i,y_i)$, $i=1,\ldots,n$, let $\boldsymbol{\Sigma}_{\boldsymbol{\theta}}=\left[
K_{\boldsymbol{\theta}}(\mathbf{x}_i,\mathbf{x}_j)
\right]_{i,j=1}^n$, so
\(
\mathbf{Y}
\sim
N\left(
\mathbf{0},
\boldsymbol{\Sigma}_{\boldsymbol{\theta}}+\sigma^2\mathbf{I}
\right).
\)
The best linear unbiased predictor (BLUP) of $f_0(\mathbf{x})$ is
\(
\widehat{f}_0(\mathbf{x})
=
\boldsymbol{\Sigma}_{1\boldsymbol{\theta}}
\left(
\boldsymbol{\Sigma}_{\boldsymbol{\theta}}
+\sigma^2\mathbf{I}
\right)^{-1}
\mathbf{Y},
\)
where $[\boldsymbol{\Sigma}_{1\boldsymbol{\theta}}]_{1i}
=
K_{\boldsymbol{\theta}}(\mathbf{x},\mathbf{x}_i)$ and $\boldsymbol{\Sigma}_{1\boldsymbol{\theta}}$ is $1\times n$. The covariance parameters $\boldsymbol{\theta}$ and $\sigma^2$ can be estimated using restricted maximum likelihood (REML) \citep{corbeil1976restricted}. 

The GP regression model naturally extends to a semiparametric model,
\begin{equation}\label{eqn:semi}
y_i
=
\mathbf{z}_i^\top\boldsymbol{\beta}
+
f_0(\mathbf{x}_i)
+
\epsilon_i,~
f_0\sim GP(0,K_{\boldsymbol{\theta}}),
\end{equation}
where $\mathbf{z}_i$ is a vector of covariates whose effects are modeled linearly and typically includes an intercept, and $\boldsymbol{\beta}$ is the corresponding vector of regression coefficients.

\section{Pairwise Additive GPs (PAGPs)}

In this section, we introduce our Pairwise Additive GP (PAGP) model.

\subsection{Additive GPs}

For a general additive GP model with $L$ terms,
\(
y_i = \mathbf{z}_i^T\boldsymbol{\beta}+\sum_{\ell=1}^L f_\ell(\mathbf{x}_i)+\epsilon_i,~
f_\ell\sim GP(0, K_\ell),
\)
where the $f_\ell$ are independent, the model can be equivalently written in the standard semiparametric GP form in \Cref{eqn:semi} as
\(
y_i=\mathbf{z}_i^T\boldsymbol{\beta}+f_0(\mathbf{x}_i)+\epsilon_i,~
f_0\sim GP(0, K),
\)
where $ K=\sum_{\ell=1}^L K_\ell.$
In other words, the covariance function of the GP prior for $f_0=\sum_{\ell=1}^L f_\ell$ is the sum of the covariance functions of the individual components \citep{durrande2012additive,duvenaud2011additivegaussianprocesses}. Supplemental Section \ref{s:computing} describes how to compute this sum efficiently for certain PAGP priors.

\subsection{PAGP Hypothesis Tests}\label{s:varimp_tests}

We now introduce the hypothesis testing framework based on the PAGP. Throughout this section, we use the squared-exponential base kernel.
We define $K_j(\mathbf{x},\mathbf{x}')=K_\theta(x_j,x_j')$ and $K_{jk}(\mathbf{x},\mathbf{x}')=K_\theta(\mathbf{x}_{jk},\mathbf{x}_{jk}')$, where $\mathbf{x}_{jk}=(x_j,x_k)^T$. Thus, $K_j$ models the main effect of $x_j$, while $K_{jk}$ models the pairwise interaction between $x_j$ and $x_k$.

As an example, we first consider testing
\(
H_0\text{: }x_1\text{ is not associated with }y.
\)
To formulate this test, we write the semiparametric PAGP model as
\begin{align}
\begin{split}\label{mod:x1}
y_i
&=
\mathbf{z}_i^T\boldsymbol{\beta}
+f_1^*(\mathbf{x}_i)
+f_{-1}^*(\mathbf{x}_i)
+\epsilon_i,~
\epsilon_i\sim\text{ i.i.d. }N(0,\sigma^2),
\end{split}
\end{align}
where
\(
f_1^*
=
f_1(x_1)+\sum_{j=2}^p f_{1j}(x_1,x_j)
\)
contains all terms involving $x_1$, and
\(
f_{-1}^*
=
\sum_{j=2}^p f_j(x_j)
+
\sum_{j>1,j<k}f_{jk}(x_j,x_k)
\)
contains all remaining main and interaction effects. Therefore, $H_0$ is equivalent to testing whether the component $f_1^*$ can be removed from the model.

Following the garrote-kernel framework of \cite{maity2011powerful}, we introduce an auxiliary parameter $\delta$ that controls the contribution of $f_1^*$. Specifically, we assign
\(
f_1^*+f_{-1}^*\sim GP(0,K_g),~
K_g=\delta K_1^*+\lambda K_{-1}^*,
\)
with
\(
K_1^*
=
\frac{1}{L}\left(K_1+\sum_{j=2}^pK_{1j}\right),~
K_{-1}^*
=
\frac{1}{L}\left(
\sum_{j=2}^pK_j+\sum_{j>1,j<k}K_{jk}
\right),
\)
and $L=p+{p\choose2}$ is the total number of main and pairwise interaction terms. Setting $\delta=0$ removes all kernel terms involving $x_1$, and hence removes $f_1^*$ from the model. Therefore, the hypothesis of no association between $x_1$ and $y$ can be written as
\(
H_0:\delta=0.
\)
Dividing the kernels by $L$ ensures that the scale of $\lambda$ is unaffected by $p$. Main and interaction effects are orthogonalized using a standard scheme to aid identifiability, particularly for tests of interaction (Supplemental Section \ref{s:orth}).

The GP model has an equivalent linear mixed model representation,
\begin{align}
\begin{split}\label{mod:pair_lmm}
\mathbf{Y}
&=
\mathbf{Z}^T\boldsymbol{\beta}
+\boldsymbol{\phi}_1^*
+\boldsymbol{\phi}_{-1}^*
+\boldsymbol{\epsilon},\\
\boldsymbol{\phi}_1^*
&\sim N(\mathbf{0},\delta\boldsymbol{\Sigma}_1^*),~
\boldsymbol{\phi}_{-1}^*
\sim N(\mathbf{0},\lambda\boldsymbol{\Sigma}_{-1}^*),~
\boldsymbol{\epsilon}\sim N(\mathbf{0},\sigma^2\mathbf{I}),
\end{split}
\end{align}
where $[\boldsymbol{\Sigma}_1^*]_{ij}=K_1^*(\mathbf{x}_i,\mathbf{x}_j)$ and $[\boldsymbol{\Sigma}_{-1}^*]_{ij}=K_{-1}^*(\mathbf{x}_i,\mathbf{x}_j)$. Thus
\(
\mathbf{Y}
\sim
N\left(
\mathbf{Z}^T\boldsymbol{\beta},
\delta\boldsymbol{\Sigma}_1^*
+\lambda\boldsymbol{\Sigma}_{-1}^*
+\sigma^2\mathbf{I}
\right)
\), $\delta$ is a variance component, and testing whether $x_1$ is associated with $y$ reduces to a variance component test of $H_0:\delta=0$.

Following \cite{zhang2003hypothesis,maity2011powerful,liu2022cross}, we use a variance component score test. Let
\(
\mathbf{V}_{\boldsymbol{\theta}}
=
\delta\boldsymbol{\Sigma}_1^*
+\lambda\boldsymbol{\Sigma}_{-1}^*
+\sigma^2\mathbf{I},~
\boldsymbol{\theta}=(\rho,\lambda,\delta,\sigma^2)^T.
\)
The parameters are estimated using REML under $H_0:\delta=0$. Denote the resulting estimates by $\widetilde{\boldsymbol{\theta}}$ and $\widetilde{\boldsymbol{\beta}}$, and let $\mathbf{V}_0=\mathbf{V}_{\widetilde{\boldsymbol{\theta}}}$. The score-based test statistic of \cite{zhang2003hypothesis} is
\(
T_n
=
(\mathbf{Y}-\mathbf{Z}^T\widetilde{\boldsymbol{\beta}})^T
\mathbf{V}_0^{-1}
\left(
\frac{\partial\mathbf{V}_{\boldsymbol{\theta}}}{\partial\delta}
\right)
\mathbf{V}_0^{-1}
(\mathbf{Y}-\mathbf{Z}^T\widetilde{\boldsymbol{\beta}}).
\)
Because $\partial\mathbf{V}_{\boldsymbol{\theta}}/\partial\delta
=\boldsymbol{\Sigma}_1^*$, this simplifies to
\[
T_n
=
(\mathbf{Y}-\mathbf{Z}^T\widetilde{\boldsymbol{\beta}})^T
\mathbf{V}_0^{-1}
\boldsymbol{\Sigma}_1^*
\mathbf{V}_0^{-1}
(\mathbf{Y}-\mathbf{Z}^T\widetilde{\boldsymbol{\beta}}).
\]
Following \cite{zhang2003hypothesis,maity2011powerful}, the null distribution of $T_n$ is approximated by an $m\chi_d^2$ distribution, where $m$ and $d$ are obtained by matching the first two moments using the Satterthwaite approximation and the efficient information for $\delta$; details are provided in Supplemental Section \ref{s:tn_null}. 

Tests for a group of predictors' joint association with $y$, specific pairwise interactions, and any pairwise interaction are constructed using the same framework by regrouping the corresponding PAGP terms (Supplemental Section \ref{s:extensions}). In practice, we estimate the lengthscale $\rho$ under $H_1$ rather than $H_0$ (Supplemental Section \ref{s:rho}). We discuss tradeoffs of this parameterization in Supplemental Section \ref{s:assns}.

\subsection{Interpreting results}\label{s:interp}

Rejecting the null hypothesis of no association between $x_j$ and $y$ does not give any information on the direction, shape, or type (e.g., main or interaction effect) of association. In addition, the reliability of our hypothesis tests may be affected by issues such as highly influential points and extrapolation difficult to identify in high-dimensional data.

To allow meaningful interpretation and check for influential points and other issues, we identify significant main and interaction effects using our proposed method and visualize them using a generalized additive model (GAM), which nonparametrically estimates smooth terms using smoothing splines \citep{wood2017generalized}. A properly-specified GAM has better predictive accuracy than our method, which is designed for hypothesis testing rather than prediction. Furthermore, a GAM decomposes $f_0$ into main and interaction effects with good uncertainty quantification. The primary weakness of GAMs is that it is difficult to choose which interaction effects to include, and including a large number of interactions leads to a cumbersome design matrix and computationally complex model. Our hypothesis tests solve these challenges by allowing rigorous selection of relevant interactions.

\subsection{Theoretical Justification}\label{s:theory}

We provide two theoretical results that explain the benefits of imposing additive structure on a GP. First, we derive posterior contraction rates showing that the effective dimension governing estimation is the largest dimension of the individual additive components, rather than the dimension of the full predictor vector. Thus, the contraction rate of PAGP depends on component dimensions of at most two, alleviating the curse of dimensionality. Second, we compare the sizes of the function spaces underlying hypothesis tests using metric entropy. This comparison shows that the PAGP substantially restricts the alternative space relative to a standard GP, providing heuristic justification for its improved hypothesis testing power. Note that these two results justify the benefit of the additive structure underlying PAGP, but do not provide a direct theoretical guarantee on the power of the proposed tests, as such power guarantees are generally intractable for nonparametric GP-based hypothesis tests and are also missing in the existing literature to the best of our knowledge.

\subsubsection{Posterior contraction rates}\label{s:rates}
We first study posterior contraction rates for a general additive GP. Suppose the target function can be written as
\(
f_0(\mathbf{x})=\sum_{l=1}^L g_l(\mathbf{x}_l),
\)
where $\mathbf{x}_l\in\mathcal{X}_l\subset\mathbb{R}^{p_l}$ is a subvector of $\mathbf{x}\in\mathcal{X}\subset\mathbb{R}^p$, and $\mathcal{X}$ and $\mathcal{X}_l$ are compact. Thus, although $f_0$ is a function of the full $p$-dimensional predictor, each component depends on only $p_l$ predictors. We model $f_0$ using the additive GP
\(
W(\mathbf{x})=\sum_{l=1}^L W_l(\mathbf{x}_l),
\)
where the $W_l$ are independent GPs with mean $0$. Let $\Pi$ denote the induced prior distribution of $W$ and let $\Pi_n(\cdot)=\Pi(\cdot\mid\mathbf{Y})$ denote its posterior distribution.

\begin{definition}
We say that the posterior contracts around $f_0$ at rate $\epsilon_n$ if
\[
\Pi_n\left(W:\|W-f_0\|>M_n\epsilon_n\right)\rightarrow 0
\]
in $P_{f_0}$-probability as $n\rightarrow\infty$, for every sequence $M_n\rightarrow \infty$.
\end{definition}
Let $C^\alpha(\mathcal{X})$ and $H^\alpha(\mathcal{X})$ denote the H\"older and Sobolev spaces, respectively~\citep{vandervaart11a}. The empirical norm is $\|g\|^2_n = \frac{1}{n}\sum_{i=1}^n g(\mathbf{x}_i)^2$. The following theorem gives the posterior contraction rate for the additive GP.

\begin{theorem}\label{thm:contract}
Assume that $g_l\in C^{\beta_l}(\mathcal{X}_l)\cap H^{\beta_l}(\mathcal{X}_l)$ and that the design points are fixed. Assume that $W_l$ has a Mat\'ern covariance with smoothness parameter $\alpha_l>0$. Then in the empirical norm, the posterior distribution of $W$ contracts around $f_0$ at rate
\[
\epsilon_n=n^{-\eta}, \eta
=
\min_{l=1,\ldots,L}
\left\{
\frac{\min\{\alpha_l,\beta_l\}}{2\alpha_l+p_l}
\right\}.
\] 
\end{theorem}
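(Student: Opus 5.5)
We will follow the general framework of \citet{vandervaart11a} (and van der Vaart and van Zanten's work on rates for Gaussian process priors): for Gaussian regression with fixed design and the empirical norm, the posterior contracts at rate $\epsilon_n$ whenever $\epsilon_n$ satisfies the concentration-function inequality
\[
\varphi_{f_0}(\epsilon_n)=\inf_{h\in\mathbb{H}:\|h-f_0\|_\infty<\epsilon_n}\tfrac12\|h\|_{\mathbb{H}}^2-\log\Pr(\|W\|_\infty<\epsilon_n)\le n\epsilon_n^2 ,
\]
where $\mathbb{H}$ is the RKHS of $W$. The plan is to reduce this to the componentwise inequalities already known for Mat\'ern priors and then combine.

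\textbf{Step 1: the additive prior.} Since $W=\sum_l W_l$ with independent $W_l$, its covariance is $K=\sum_l K_l$. Its RKHS is $\mathbb{H}=\{\sum_l h_l: h_l\in\mathbb{H}_l\}$ with
\[
\|h\|_{\mathbb{H}}^2=\inf\Bigl\{\sum_l\|h_l\|_{\mathbb{H}_l}^2 : h=\textstyle\sum_l h_l\Bigr\}.
\]
(Here each $h_l$ is viewed as a function of $\mathbf{x}$ through $\mathbf{x}_l$.)

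\textbf{Step 2: decoupling.} For the decentering term, we will choose $h_l$ with $\|h_l-g_l\|_\infty<\epsilon/L$. Then $h=\sum_l h_l$ satisfies $\|h-f_0\|_\infty<\epsilon$ and $\|h\|_{\mathbb{H}}^2\le\sum_l\|h_l\|_{\mathbb{H}_l}^2$.

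For the small-ball term, independence and the inclusion $\{\|W_l\|_\infty<\epsilon/L\ \forall l\}\subset\{\|W\|_\infty<\epsilon\}$ give
\[
-\log\Pr(\|W\|_\infty<\epsilon)\le\sum_l -\log\Pr(\|W_l\|_\infty<\epsilon/L).
\]
Combining the two bounds yields $\varphi_{f_0}(\epsilon)\le\sum_l\varphi^{(l)}_{g_l}(\epsilon/L)$, where $\varphi^{(l)}$ is the concentration function of $W_l$ on $\mathcal{X}_l\subset\mathbb{R}^{p_l}$. Because $L$ is fixed, the factor $1/L$ only changes constants.

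\textbf{Step 3: component bounds.} For a Mat\'ern process of smoothness $\alpha_l$ on a compact subset of $\mathbb{R}^{p_l}$ and $g_l\in C^{\beta_l}\cap H^{\beta_l}$, \citet{vandervaart11a} show that $\varphi^{(l)}_{g_l}(\epsilon)\le n\epsilon^2$ for
\[
\epsilon\asymp\epsilon_{n,l}=n^{-\min\{\alpha_l,\beta_l\}/(2\alpha_l+p_l)}.
\]
Their argument bounds the small-ball exponent by $\epsilon^{-p_l/\alpha_l}$ via entropy of the unit ball of the Sobolev-type RKHS. The approximation term is handled by convolving $g_l$ with a smooth kernel and bounding the RKHS norm through the spectral density $(1+\|\omega\|^2)^{-\alpha_l-p_l/2}$.

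Take $\epsilon_n=\max_l\epsilon_{n,l}=n^{-\eta}$. Each $\varphi^{(l)}$ is decreasing in $\epsilon$, so $\varphi^{(l)}_{g_l}(\epsilon_n/L)\lesssim n\epsilon_{n,l}^2\le n\epsilon_n^2$ for every $l$. Summing over the fixed number of components gives $\varphi_{f_0}(\epsilon_n)\lesssim n\epsilon_n^2$, and absorbing the constant into $\epsilon_n$ gives the stated rate.

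\textbf{Main obstacle.} The hardest part is correctly invoking the general fixed-design regression contraction theorem: the noise variance $\sigma^2$ must be treated as known or handled separately, and the sup-norm concentration must transfer to the empirical norm. Since $\|\cdot\|_n\le\|\cdot\|_\infty$, the transfer should be direct once Step 2's sup-norm bounds hold.

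A subtler point is that the $W_l$ live on different index spaces. The small-ball and RKHS facts must therefore be stated for $W_l$ as processes on $\mathcal{X}_l$, and then lifted to $\mathcal{X}$ through the coordinate projection, which preserves sup norms and RKHS norms. Checking this lifting carefully is where I expect most of the care to go.
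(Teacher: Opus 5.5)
Your proposal is correct and follows essentially the same route as the paper. Both arguments bound the additive prior's concentration function by the component ones, lift the Mat\'ern bounds of van der Vaart and van Zanten (2011) from $\mathcal{X}_l$ to $\mathcal{X}$ through the coordinate projection, and let the worst component exponent determine $\eta$. The only real differences are in how two steps are justified: the paper does the lifting via the composition-kernel RKHS theorem of Paulsen and Raghupathi, and it imports the decoupling bound $\phi^W_{f_0}(\epsilon L)\le 2\sum_l\phi^{W_l}_{f_l}(\epsilon/2)$ from Theorem 2.3 of van der Vaart and van Zanten (2008), whereas you derive $\varphi_{f_0}(\epsilon)\le\sum_l\varphi^{(l)}_{g_l}(\epsilon/L)$ directly from the sum-kernel RKHS norm and the independence of the $W_l$, which is self-contained and gives the same rate.
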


The proof is provided in Supplemental Section \ref{s:contract_proof}. Theorem \ref{thm:contract} shows that the contraction rate depends on the dimensions $p_l$ of the individual additive components rather than directly on the full dimension $p$. In particular, if
$\alpha_1=\cdots=\alpha_L=\beta_1=\cdots=\beta_L$, the contraction rate is the minimax rate for nonparametric regression with predictor dimension $\max_l p_l$ \citep{vandervaart11a}. For the PAGP, each component involves either one or two predictors, so $\max_l p_l=2$ regardless of $p$. Thus, imposing the PAGP structure alleviates the curse of dimensionality when the true regression function contains only main and pairwise interaction effects. The posterior contraction rate implies convergence of the posterior mean to $f_0$ in the empirical norm at the same rate \citep{vandervaart11a}.

We use Mat\'ern base kernels for this theoretical result, whereas the implementation uses squared-exponential base kernels because they allow efficient computation of the PAGP kernel (Supplemental Section \ref{s:computing}). Establishing optimal contraction rates for GPs with squared-exponential kernels is substantially more involved \citep{van2009adaptive}; we therefore use Mat\'ern kernels to characterize the effect of additive structure on the contraction rate.

\subsubsection{Size of parameter spaces under the null and alternative}\label{s:size}

We next compare the sizes of the function spaces associated with additive and standard GPs. This comparison provides heuristic justification for why restricting the alternative hypothesis to the PAGP structure can improve hypothesis testing power. Let $\Theta_0$ and $\Theta_1$ denote the parameter spaces under the null and alternative hypotheses, respectively, where each element of $\Theta_j$ is a regression function $f$. We quantify the size of these spaces using metric entropy, a standard measure of the size of a function space \citep{lorentz1962metric}. Specifically, the metric entropy $H_\epsilon(\Theta)$ is the logarithm of the minimum number of balls of radius $\epsilon$ required to cover $\Theta$.

The following result compares the metric entropy of an additive function space with that of an unrestricted function space having the same smoothness.

\begin{theorem}\label{fct:size}
Let $\Theta$ be a space of functions $f$ on a compact domain $\mathcal{X}\subset\mathbb{R}^p$. Up to a constant,
\[
H_\epsilon(\Theta)\lesssim
\begin{cases}
  \epsilon^{-p/\beta},
& f\in C^\beta(\mathcal{X})\cap H^\beta(\mathcal{X}),\\
\epsilon^{-\max_l p_l/\beta},
& f=\sum_{l=1}^L g_l,\quad
g_l\in C^\beta(\mathcal{X}_l)\cap H^\beta(\mathcal{X}_l),
\quad \mathcal{X}_l\subset\mathbb{R}^{p_l}.
\end{cases}
\]
\end{theorem}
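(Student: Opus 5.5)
The plan is to treat both cases through the classical entropy bound for H\"older balls. Then, in the additive case, cover each component separately and combine the covers with the triangle inequality. Throughout, $\Theta$ is read as a bounded set, e.g.\ a ball of radius $R$ in $C^\beta(\mathcal{X})$ (respectively a sum of such balls in $C^\beta(\mathcal{X}_l)$). The entropy is measured in the supremum norm, which dominates the empirical and $L_2$ norms, so the bounds transfer to those norms as well. Intersecting with $H^\beta$ only shrinks the set, so it can only decrease the entropy and may be ignored for an upper bound.

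\emph{Step 1 (unrestricted case).} Invoke the Kolmogorov--Tikhomirov bound (see \citet{lorentz1962metric} or \citet{vandervaart11a}): for a compact $\mathcal{X}\subset\mathbb{R}^p$ with, say, Lipschitz boundary or contained in a cube,
\[
\log N\bigl(\epsilon,\{f:\|f\|_{C^\beta(\mathcal{X})}\le R\},\|\cdot\|_\infty\bigr)\le C\,\epsilon^{-p/\beta}.
\]
This is exactly the first case. The only care needed is to reduce a general compact domain to a cube. I would do this by embedding $\mathcal{X}$ in a cube and using a bounded extension operator for H\"older functions. This is the point where I would add a mild regularity assumption on $\mathcal{X}$ if necessary, since the statement is only claimed up to a constant.

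\emph{Step 2 (additive case).} Let $\Theta_l$ denote the H\"older ball on $\mathcal{X}_l\subset\mathbb{R}^{p_l}$, viewed as functions of $\mathbf{x}$ through the coordinate projection $\mathbf{x}\mapsto\mathbf{x}_l$. Composition with a projection preserves sup-norm distances, so Step 1 applied in dimension $p_l$ gives
\[
H_{\epsilon/L}(\Theta_l)\le C_l (\epsilon/L)^{-p_l/\beta}.
\]
Now take an $\epsilon/L$-net $\mathcal{N}_l$ for each $\Theta_l$. For $f=\sum_l g_l$, pick $h_l\in\mathcal{N}_l$ with $\|g_l-h_l\|_\infty\le\epsilon/L$. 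The triangle inequality gives $\|f-\sum_l h_l\|_\infty\le\epsilon$, so $\{\sum_l h_l: h_l\in\mathcal{N}_l\}$ is an $\epsilon$-net of cardinality $\prod_l|\mathcal{N}_l|$. Therefore
\[
H_\epsilon(\Theta)\le\sum_{l=1}^L C_l L^{p_l/\beta}\epsilon^{-p_l/\beta}\le C'\,\epsilon^{-\max_l p_l/\beta}
\]
for $\epsilon\le1$, with $C'$ depending on $L$, the $C_l$ and $\beta$ but not on $\epsilon$. For the PAGP, this yields $\epsilon^{-2/\beta}$ regardless of $p$.

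\emph{Main obstacle.} The main difficulty is not the combination argument, which is elementary. It is making Step 1 rigorous for arbitrary compact domains and making precise what ``$\Theta$'' means: it must be a norm-bounded set, since unbounded H\"older spaces have infinite entropy. I would state these conventions explicitly, with a bounded H\"older ball and a regular domain, and cite the classical result rather than reprove it. One should also note that the constant $C'$ grows with $L$, hence with $p$ through $L=p+{p\choose 2}$. The claimed comparison therefore concerns the exponent of $\epsilon$, which is what the heuristic power argument uses.
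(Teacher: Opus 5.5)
Your proposal is correct and follows the paper's route: the classical Lorentz/Kolmogorov--Tikhomirov bound $\epsilon^{-p/\beta}$ for H\"older balls, then a direct-sum bound for the additive class. The paper cites Equation (6) of \cite{lorentz1962metric} for the direct-sum step and treats only $p_l=2$, whereas you prove that step with products of $\epsilon/L$-nets, handle general $p_l$, and state the needed boundedness of $\Theta$ and the $L$-dependence of the constant.
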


The proof is provided in Supplemental Section \ref{s:size_proof}. Theorem \ref{fct:size} shows that the size of an unrestricted function space depends on the full predictor dimension $p$, whereas the size of an additive function space depends only on the largest component dimension $d=\max_l p_l$. For the PAGP, $d=2$ regardless of $p$. Thus, the function space allowed by PAGP is substantially smaller than that allowed by a standard GP as $p$ increases.

This distinction is particularly relevant under the alternative hypothesis. A standard GP allows the alternative to contain general $p$-dimensional functions, whereas the PAGP restricts it to sums of functions involving at most two predictors. This provides heuristic justification for the greater power of PAGP than a standard GP when the additive structure is appropriate.

\section{Simulation Studies}

We conduct simulation studies to evaluate the empirical performance of PAGP hypothesis tests. We first examine how hypothesis testing performance changes with predictor dimension, and then evaluate Type I error and power for testing individual predictor associations, interaction effects, and joint associations of multiple predictors. We additionally report prediction accuracy as a secondary measure of model performance, although prediction is not the primary objective of PAGP.

We compare PAGP with BKMR, GKM, a pairwise-additive version of GKM, GAM-based hypothesis tests implemented using \texttt{mgcv} \citep{wood2013p}, and ordinary least-squares regression (OLS). The pairwise-additive GKM uses the same testing framework as GKM but replaces the standard GP with a sum of two-dimensional GP components, without the orthogonalization used in PAGP. This comparison helps separate the effect of restricting the interaction order from the effect of our testing construction. Nominal Type I error rate was 0.10 for frequentist methods. BKMR uses posterior inclusion probabilities (PIPs) to indicate variable importance, and used PIP cutoffs of 0.50 and 0.90 to reject $H_0$.


\subsection{Simulation Setup}

For all simulations, predictors were generated independently as $x_j\sim \mathrm{Unif}(0,1)$, and
\[
y_i=f_0(\mathbf{x}_i)+\epsilon_i,~
\epsilon_i\overset{\mathrm{i.i.d.}}{\sim}N(0,\sigma^2),
\]
where $\sigma^2$ was chosen so that the signal-to-noise ratio
$\mathrm{SNR}
=
\frac{\operatorname{Var}\{f_0(\mathbf{x}_i)\}}{\sigma^2}=1$. We considered the following choices of $f_0$:
\begin{align}
    f_0(\mathbf{x}) &= \cos(2\pi x_1) + \cos(2\pi x_2) + 15(x_3 - 0.5)(x_4 - 0.5),\label{sim:cos} \\
    f_0(\mathbf{x}) &= \sin(2\pi x_1) + \sin(2\pi x_2) + 15(x_3 - 0.5)(x_4 - 0.5),\label{sim:sin}\\
    f_0(\mathbf{x}) &= 15(x_1 - 0.5)(x_2 - 0.5) + \sin(2\pi x_3) + \sin(2\pi x_4),\label{sim:int} \\
    f_0(\mathbf{x}) &= \sin(2\pi x_1) + \sin(2\pi x_2) + \sin(2\pi x_6) + \sin(2\pi x_7).\label{sim:joint}
\end{align}

In (\ref{sim:cos}), $x_1$ is associated with $y$ but $\operatorname{Cov}(x_1,y)=0$, whereas in (\ref{sim:sin}), $x_1$ is associated with $y$ and $\operatorname{Cov}(x_1,y)\neq 0$. These two scenarios therefore compare performance when the association cannot and can, respectively, be detected by a test of linear association.  Scenario (\ref{sim:int}) represents a pure interaction: $x_1$ affects $y$ only through its interaction with $x_2$ without any marginal effect. In fact, because $x_1, x_2$ are independent and uniformly distributed on $[0,1]$, the marginal effect of $x_1$ is $f_{mar,1}(x_1)=\int_{0}^1(x_1 - 0.5)(x_2 - 0.5)dx_2=0$. Scenario (\ref{sim:joint}) was used to test the joint association of $x_1,\ldots,x_5$ with $y$. We used sample sizes $n=100$ and $n=300$. Each simulation setting was repeated 2000 times.

\subsection{Simulation results}\label{s:sim_results}

We first examine how hypothesis testing performance changes with predictor dimension. Figure \ref{fig:dim_n100_power} shows the power and Type I error for testing the association between $x_1$ and $y$ under (\ref{sim:sin}) as $p$ increases. The power of GKM, which uses a standard GP and allows interactions of arbitrary order, decreases sharply as $p$ increases. In contrast, the power of PAGP and pairwise GKM decreases more slowly, consistent with the benefit of restricting the regression function to main and pairwise interaction effects. Type I error remains relatively stable as the dimension increases.

\begin{figure}
    \centering
    \includegraphics[width=0.9\linewidth]{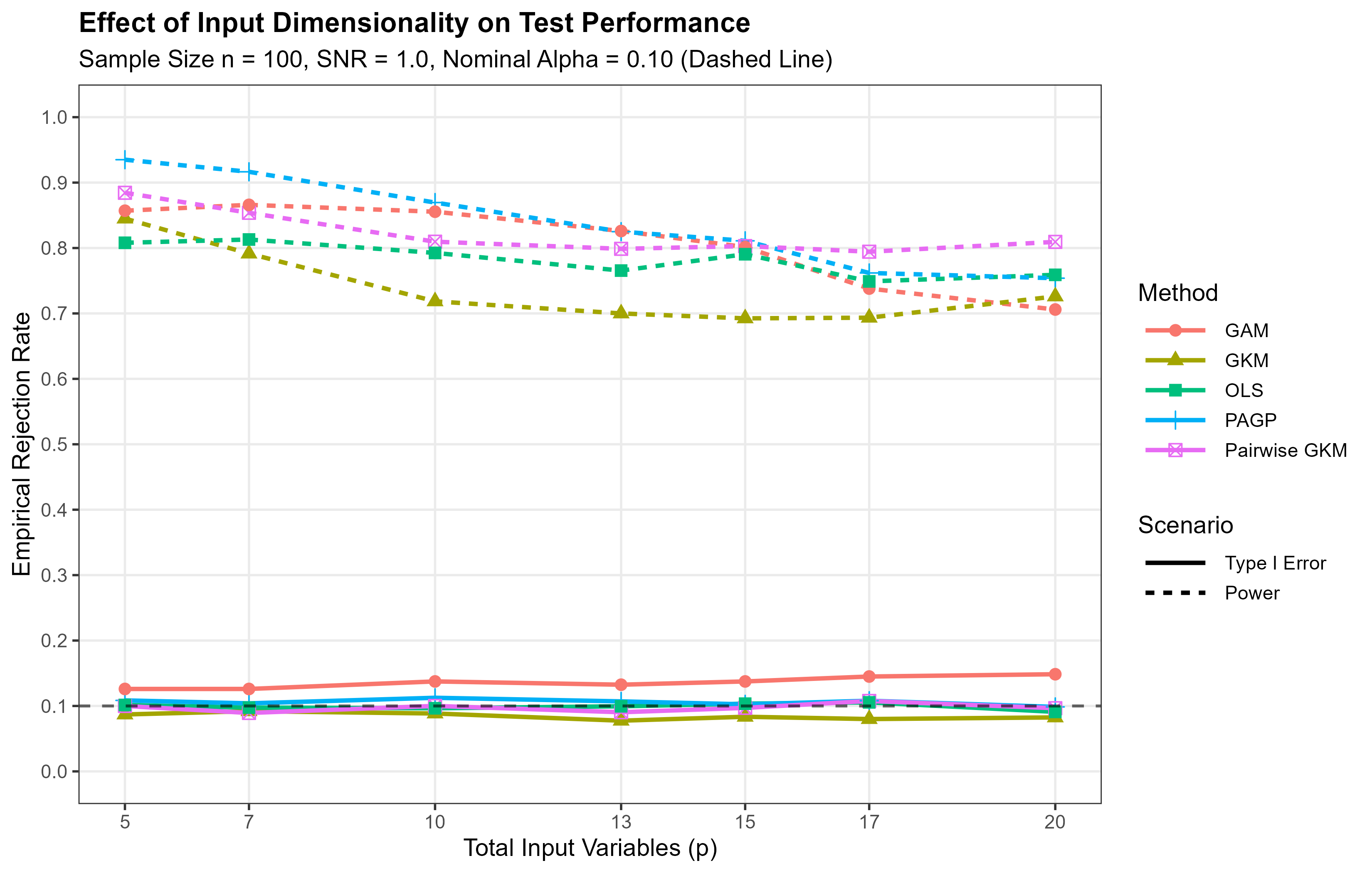}
    \caption{Impact of predictor dimension on power and Type I error for testing the association between $x_1$ and $y$ under (\ref{sim:sin}) with $n=100$.}
    \label{fig:dim_n100_power}
\end{figure}

We next compare methods for testing the association between an individual predictor and the outcome. Table \ref{tab:var_imp_sims} reports results for testing the association between $x_1$ and $y$ under the cosine, sine, and pure interaction scenarios.

\begin{table}[!htbp]
\centering
\small
\caption{Simulation results for testing the association between $x_1$ and $y$.}
\label{tab:var_imp_sims}
\begin{spacing}{0.7}
\begin{tabular}{lllrrr}
\toprule
\footnotesize
True $f_0$ & $n$ & Method & Type I Error & Power & MSE\\
\midrule
\multirow{14}{*}{Cosine (\ref{sim:cos})}
& \multirow{7}{*}{100}
& BKMR (PIP $>$ 0.5) & 0.094 & 0.656 & 0.264 \\
&& BKMR (PIP $>$ 0.9) & 0.006 & 0.292 & 0.264 \\
&& GAM & 0.144 & 0.792 & 0.515 \\
&& GKM & 0.091 & 0.116 & 0.424 \\
&& OLS & 0.098 & 0.121 & 0.643 \\
&& Pairwise GKM & 0.094 & 0.118 & 0.446 \\
&& PAGP & 0.125 & 0.625 & 0.386 \\
\cmidrule{2-6}
& \multirow{7}{*}{300}
& BKMR (PIP $>$ 0.5) & 0.012 & 0.999 & 0.062 \\
&& BKMR (PIP $>$ 0.9) & 0.002 & 0.988 & 0.062 \\
&& GAM & 0.112 & 1.000 & 0.370 \\
&& GKM & 0.100 & 0.112 & 0.183 \\
&& OLS & 0.108 & 0.099 & 0.548 \\
&& Pairwise GKM & 0.103 & 0.102 & 0.168 \\
&& PAGP & 0.100 & 0.970 & 0.169 \\
\midrule

\multirow{14}{*}{Sine (\ref{sim:sin})}
& \multirow{7}{*}{100}
& BKMR (PIP $>$ 0.5) & 0.123 & 0.883 & 0.234 \\
&& BKMR (PIP $>$ 0.9) & 0.007 & 0.550 & 0.234 \\
&& GAM & 0.134 & 0.862 & 0.543 \\
&& GKM & 0.081 & 0.742 & 0.363 \\
&& OLS & 0.099 & 0.810 & 0.510 \\
&& Pairwise GKM & 0.098 & 0.830 & 0.402 \\
&& PAGP & 0.102 & 0.878 & 0.350 \\
\cmidrule{2-6}
& \multirow{7}{*}{300}
& BKMR (PIP $>$ 0.5) & 0.008 & 0.994 & 0.113 \\
&& BKMR (PIP $>$ 0.9) & 0.001 & 0.972 & 0.113 \\
&& GAM & 0.112 & 1.000 & 0.377 \\
&& GKM & 0.078 & 1.000 & 0.205 \\
&& OLS & 0.100 & 0.998 & 0.423 \\
&& Pairwise GKM & 0.086 & 1.000 & 0.194 \\
&& PAGP & 0.085 & 0.999 & 0.188 \\
\midrule

\multirow{14}{*}{Pure interaction (\ref{sim:int})}
& \multirow{7}{*}{100}
& BKMR (PIP $>$ 0.5) & 0.306 & 0.992 & 0.237 \\
&& BKMR (PIP $>$ 0.9) & 0.021 & 0.922 & 0.237 \\
&& GAM & 0.134 & 0.204 & 0.548 \\
&& GKM & 0.104 & 0.531 & 0.369 \\
&& OLS & 0.100 & 0.144 & 0.512 \\
&& Pairwise GKM & 0.091 & 0.437 & 0.407 \\
&& PAGP & 0.110 & 0.951 & 0.352 \\
\cmidrule{2-6}
& \multirow{7}{*}{300}
& BKMR (PIP $>$ 0.5) & 0.072 & 1.000 & 0.112 \\
&& BKMR (PIP $>$ 0.9) & 0.005 & 1.000 & 0.112 \\
&& GAM & 0.108 & 0.181 & 0.375 \\
&& GKM & 0.098 & 0.665 & 0.203 \\
&& OLS & 0.101 & 0.154 & 0.420 \\
&& Pairwise GKM & 0.082 & 0.986 & 0.192 \\
&& PAGP & 0.109 & 0.999 & 0.187 \\
\bottomrule

\end{tabular}
\end{spacing}
\end{table}

The cosine and sine scenarios illustrate the distinction between nonlinear and linear association. Under (\ref{sim:cos}), $x_1$ is associated with $y$ but $\operatorname{Cov}(x_1,y)=0$. OLS, GKM, and Pairwise GKM therefore have little power, even as the sample size increases from $n=100$ to $n=300$. In contrast, PAGP detects the nonlinear association, with power increasing from 0.625 to 0.970. GAM and BKMR also detect the association, although GAM has somewhat elevated Type I error for $n=100$, while the Type I error of BKMR depends strongly on the PIP cutoff.

Under (\ref{sim:sin}), $x_1$ is associated with $y$ and $\operatorname{Cov}(x_1,y)\neq0$, so methods based on linear association perform substantially better. For $n=100$, PAGP has power 0.878, compared with 0.742 for GKM and 0.830 for Pairwise GKM, while maintaining Type I error close to the nominal level. The higher power of PAGP relative to GKM is consistent with the benefit of restricting the GP to main and pairwise interaction effects when higher-order interactions are absent. At $n=300$, all three methods have power close to one.

The pure interaction scenario (\ref{sim:int}) evaluates whether a method can detect the association between $x_1$ and $y$ when $x_1$ has no marginal main effect and affects $y$ through its interaction with $x_2$. PAGP has power 0.951 for $n=100$ and 0.999 for $n=300$. GAM and OLS have low power because their tests of the association of $x_1$ do not include its interaction with $x_2$. GKM has power 0.531 and 0.665, while Pairwise GKM has power 0.437 and 0.986. BKMR also has high power, although the PIP $>0.5$ rule has substantially elevated Type I error for $n=100$.

We next evaluate tests specifically targeting interaction effects. Table \ref{tab:int_sims} reports results for testing the specific interaction between $x_1$ and $x_2$ and for testing whether any pairwise interaction is present. Both tests use the pure interaction data-generating process (\ref{sim:int}).

\begin{table}[ht]
\centering
\caption{Simulation results for testing interaction effects under (\ref{sim:int}).}
\label{tab:int_sims}
\begin{tabular}{lllrr}
\toprule
Test & $n$ & Method & Type I Error & Power\\
\midrule
\multirow{6}{*}{$x_1,x_2$ interaction}
& \multirow{3}{*}{100}
& GAM & 0.166 & 1.000 \\
&& OLS & 0.106 & 1.000 \\
&& PAGP & 0.104 & 0.985 \\
\cmidrule{2-5}
& \multirow{3}{*}{300}
& GAM & 0.129 & 1.000 \\
&& OLS & 0.097 & 1.000 \\
&& PAGP & 0.098 & 1.000 \\
\midrule
\multirow{2}{*}{Any interaction}
& 100 & PAGP & 0.127 & 0.478 \\
& 300 & PAGP & 0.112 & 0.989 \\
\bottomrule
\end{tabular}
\end{table}

For the specific $x_1,x_2$ interaction, PAGP has power 0.985 at $n=100$ and 1.000 at $n=300$, with nominal Type I error rate. GAM and OLS also have power one because the corresponding interaction terms are explicitly included in these models; however, GAM has elevated Type I error, particularly at $n=100$. The global test for any interaction is more difficult at the smaller sample size, with PAGP attaining power 0.478 at $n=100$, but its power increases to 0.989 at $n=300$.

Finally, we evaluate joint testing of multiple predictors. Table \ref{tab:joint_sims} reports results for testing the joint association of $x_1,\ldots,x_5$ with $y$ under (\ref{sim:joint}).

\begin{table}[ht]
\centering
\caption{Simulation results for testing the joint association of $x_1,\ldots,x_5$ with $y$ under (\ref{sim:joint}).}
\label{tab:joint_sims}
\begin{tabular}{llrrr}
\toprule
$n$ & Method & Type I Error & Power & MSE\\
\midrule
\multirow{5}{*}{100}
& BKMR (PIP $>$ 0.5) & 0.844 & 1.000 & 0.297 \\
& BKMR (PIP $>$ 0.9) & 0.049 & 0.858 & 0.297 \\
& GAM & 0.220 & 0.996 & 0.222 \\
& OLS & 0.097 & 0.978 & 0.288 \\
& PAGP & 0.152 & 0.861 & 0.260 \\
\midrule
\multirow{5}{*}{300}
& BKMR (PIP $>$ 0.5) & 0.353 & 1.000 & 0.201 \\
& BKMR (PIP $>$ 0.9) & 0.021 & 1.000 & 0.201 \\
& GAM & 0.149 & 1.000 & 0.050 \\
& OLS & 0.098 & 1.000 & 0.224 \\
& PAGP & 0.121 & 0.992 & 0.148 \\
\bottomrule
\end{tabular}
\end{table}

For the joint association test, PAGP has power 0.861 at $n=100$ and 0.992 at $n=300$. Its Type I error decreases from 0.152 to 0.121 as the sample size increases. GAM has higher power but elevated Type I error at both sample sizes. BKMR is highly sensitive to the PIP cutoff: the 0.5 cutoff has Type I error of 0.844 and 0.353 for $n=100$ and $n=300$, respectively, whereas the 0.9 cutoff is conservative. OLS performs well in this particular scenario because $Cov(x_j,y)\neq 0$ for the sine functions.

Prediction MSE is reported as a secondary measure of model performance in Tables \ref{tab:var_imp_sims}, \ref{tab:int_sims}, and  \ref{tab:joint_sims}. 

Overall, these simulations show that PAGP provides reliable Type I error control while maintaining high power across nonlinear, interactive, and joint alternatives, with its advantage over standard GP-based tests becoming more pronounced as predictor dimension increases.

\section{Applications}
In this section, we analyze two data sets to demonstrate our method. First, we analyze associations between metals in private well water measured from 1998-2019 and small-area estimates of obesity prevalence among North Carolina (NC) census tracts. Second, we analyze associations between liver injury and blood concentrations of five metals using United States National Health and Nutrition Examination Survey (NHANES) data from 2017--2020. In Supplemental Section \ref{s:telomere}, we additionally analyze associations between mean leukocyte telomere length and 18 persistent organic pollutants (POPs) using NHANES 2001--2002 data.

For the PAGP analyses, we conduct hypothesis tests using our method and then interpret significant associations using GAMs, as described in Section \ref{s:interp}. Covariates are included as predictors in the PAGP rather than as linear adjustment terms, allowing nonlinear effects and interactions, except in the telomere analysis, where we match previous method comparisons. Our GAM main and interaction effect plots are based on the \texttt{gratia} R package \citep{Simpson2024}. These partial effects are centered to have mean zero and represent the additive contribution of each term to the expected outcome, rather than the expected outcome itself, which also depends on the values of the other predictors.

We compare PAGP with multiple linear regression, GAM, and GKM in the two main analyses. BKMR is additionally included in the supplemental telomere analysis but is omitted from the NC and liver analyses due to computational cost and difficulty achieving model convergence. All predictors are shifted and scaled to lie within [0,1]. For the NCWELL data, all methods included an additional spatially-correlated residual term to account for residual spatial autocorrelation in estimated obesity prevalence. Summary statistics for the data sets are in Supplemental Section \ref{s:supp_addl_results}.

\subsection{NCWELL Data}

In NC, 2.4 million people obtain their drinking water from private wells \citep{dieter2018estimated}. Unlike public water systems, private wells are not federally regulated, increasing the potential for exposure to harmful contaminants \citep{eaves2022analysis}. Inorganic arsenic (iAs), ranked second as of 2025 on the Agency for Toxic Substances and Disease Registry (ATSDR) priority list \citep{astdr_priority}, is of particular concern. Arsenic exposure is associated with a number of cancers and other negative health effects, including possible effects later in life from early-life exposure \citep{naujokas2013broad}. Inorganic arsenic has been linked to obesity in experimental, animal studies, although there is no consistent human evidence \citep{eick2020arsenic}. In NC, complicated social and geographic factors also contribute to health disparities and may confound exposure-response associations \citep{plescia2014reducing, barber2020place}.

We examine associations between 11 metals detected in private wells and estimated obesity prevalence among 1,693 NC census tracts with at least one private well measurement for each metal. We use metals detected in the well water of at least 70\% of census tracts for analysis: inorganic arsenic, lead, manganese, chromium, copper, zinc, iron, calcium, magnesium, sodium, and barium. We use well water tests conducted in 2019. All well water tests within a census tract were averaged using 2010 census tract boundaries. Well measurements of metals are from the NCWELL database which is described in detail in \cite{eaves2022analysis}. Small area estimates of obesity prevalence  in 2019 are created by the Centers for Disease Control Population Level Analysis and Community Estimates (CDC PLACES) \citep{greenlund2022places, zhang2014multilevel}. Obesity is calculated from self-reported height and weight, and we use estimated crude prevalence among adults (at least 18 years old). Tract-level covariates obtained from the American Community Survey \citep{acs2019_5yr} included five-year estimates of percent Black, percent Hispanic, median tract income, median tract age, and tract population. Tract centroid latitude and longitude were also included as covariates. Estimated percent of tract residents using private well water used results for 2020 from \citep{murray2025machine}. The small area estimates of obesity prevalence are influenced heavily by demographic variables, but include spatially-structured, non-demographic variation which may allow us to detect metals' effects \citep{kong2020use, zhang2014multilevel}. We include spatially-correlated residuals to account for the spatial structure of non-demographic variation in the outcome data. The exposure variables were the tract-level average measured well concentration in parts per billion for each metal. All arsenic measured in wells was inorganic.

Using PAGP, the metals were jointly associated with obesity ($p = 0.001$), but using generalized least squares (GLS), fit using the \texttt{GpGp} R package \citep{GpGp}, metals were not jointly significant ($p=0.717$). Results of the individual association tests for metals are displayed in Table \ref{tab:NCWELL}.
\begin{table}[!htbp]
\centering
\caption{P-values for tests of association between the average concentration (ppb) of metals in private wells in each NC census tract and estimated obesity prevalence, for 1,693 tracts and 11 metals detected in at least 70\% of tracts, adjusted for latitude and longitude of the tract centroid, estimated percent of tract residents using private well water, percent Black, percent Hispanic, median tract income, median tract age, and tract population. Because our focus is on hypothesis testing in nonparametric models, we report p-values rather than effect estimates and confidence intervals. Metals were natural-log transformed.}
\label{tab:NCWELL}
\begin{tabular}{lrrrr}
\toprule
 Variable & PAGP & GLS& GKM & GAM \\
\midrule
Inorganic Arsenic& 0.02& 0.252& 0.018& 0.023\\
Lead&  0.05& 0.290& 0.069& 0.472\\
Manganese& 0.13& 0.851& 0.074& 0.231\\
Chromium& 0.34& 0.374& 0.336& 0.160\\
Copper& 0.14& 0.218& 0.089& 0.144\\
Zinc& 0.11& 0.239& 0.126& 0.643\\
Iron& 0.86& 0.666& 0.798& 0.539\\
Calcium& 0.47& 0.694& 0.555& 0.637\\
Magnesium& 0.36& 0.875& 0.418& 0.552\\
Sodium& 0.39& 0.519& 0.672& 0.881\\
Barium& 0.46& 0.588& 0.373& 0.653\\
\bottomrule
\end{tabular}
\end{table}

PAGP identified significant associations with obesity prevalence for inorganic arsenic and lead at the 0.05 significance level. The PAGP test for any interaction found evidence of interactive effects ($p < 0.001$). The full list of significant ($p < 0.05$) interaction effects is displayed in the Supplemental Section \ref{s:ncwell_addl}. Because covariates were included as predictors in PAGP, the interaction tests also allowed for metal-covariate interactions. The significant interactions involving metals were between arsenic and lead ($p=0.044$), and arsenic and median income ($p=0.035$). GLS found no significant associations between metals and obesity, likely due to the highly significant and nonlinear effects of demographic variables on the small area estimates of obesity prevalence (Figure \ref{fig:ncwell_demo}), reducing power in the misspecified linear model. While our variable significance results were similar to the GAM, the GAM cannot easily identify relevant interactions.

We next use a GAM including main and interaction effects identified by PAGP to characterize these effects. The estimated main effects for arsenic and lead from the GAM are displayed in Figure \ref{fig:NCWELL_gam_main}. Arsenic has an increasing association with obesity prevalence in our data. Experimental evidence has shown that inorganic arsenic exposure can cause obesity in animals \citep{eick2020arsenic, tinkov2021adipotropic}. Despite this, observational studies of human health effects have not found consistent evidence of association between arsenic exposure and obesity \citep{eick2020arsenic}. Studies of human subjects may be confounded by exposure to organic arsenic and may suffer from reverse causation because adipose tissue stores arsenic \citep{eick2020arsenic}. Our ecologic study is limited by aggregated data and use of small area estimates, but our well measurements distinguish inorganic arsenic and there is no apparent mechanism for reverse causation in our analysis. This may explain why our results align more closely to experimental evidence of arsenic's causal effect on obesity than with previous observational studies.

\begin{figure}
    \centering
    
    \includegraphics[width=0.6\linewidth]{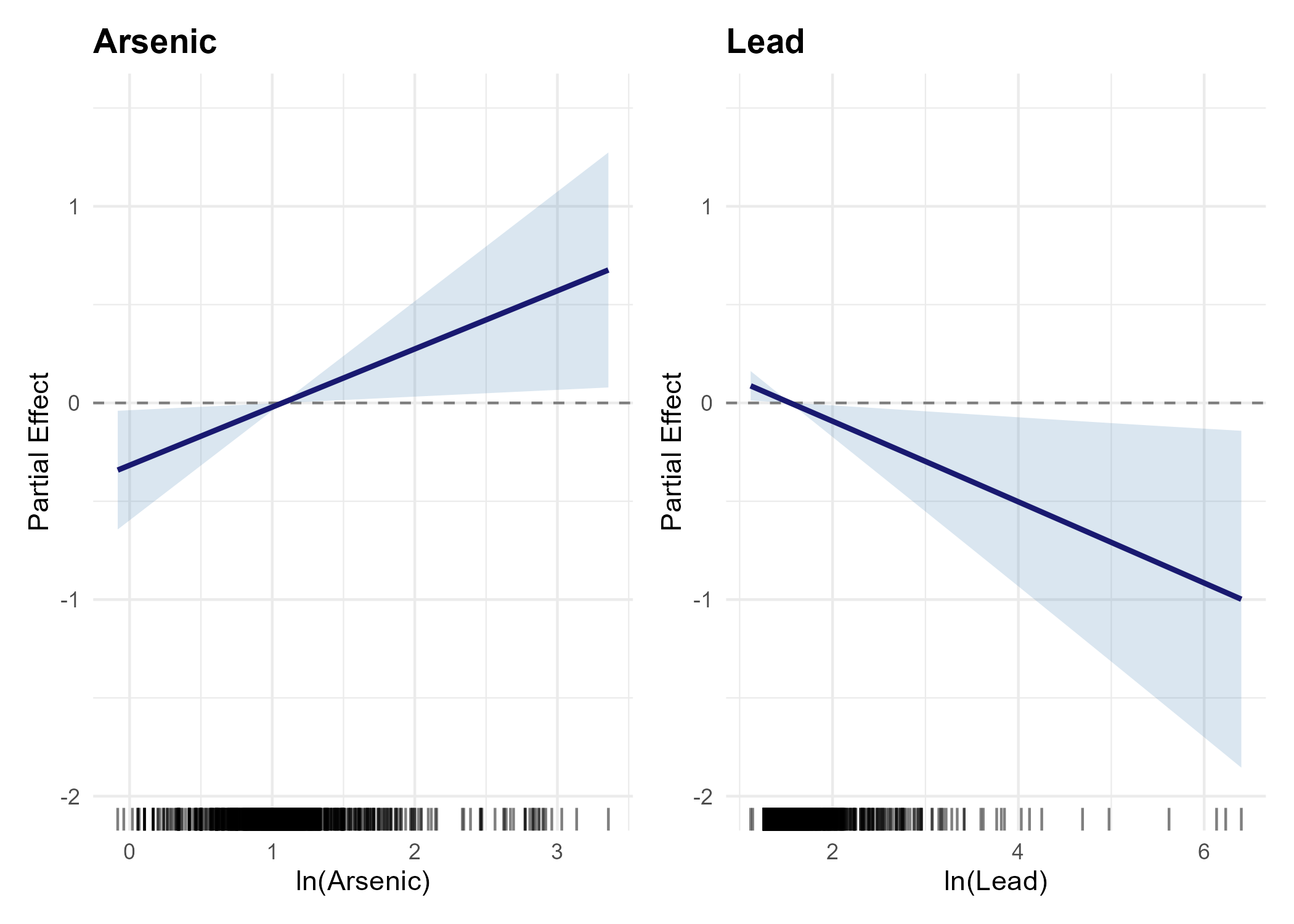}\\
    \caption{Selected partial main effects estimated by GAM smooth terms, using main and interaction effects identified by the PAGP hypothesis tests, for the NCWELL data. The partial effects characterize associations found between natural log-transformed metal concentrations (ppb) in private wells, averaged by census tract, and estimated obesity prevalence.}
        \label{fig:NCWELL_gam_main}

\end{figure}

Lead has a decreasing association with obesity prevalence in our data. Laboratory studies show causal effects of lead on increased obesity risk in animals \citep{tinkov2021adipotropic}, but a recent meta-analysis of observational studies \citep{zangiabadian2024association} found an association between lead exposure and decreased risk of obesity in observational studies of humans. Our study may be affected by confounding from similar sources as other observational studies of lead and obesity.

Finally, we examine the two significant interactions involving metals identified by PAGP. The estimated arsenic-median income interaction shows a stronger association between arsenic and obesity for higher levels of income (Figure \ref{fig:NCWELL_gam_as_income}); note that no observations have high values of both arsenic and income relative to the rest of the sample (Figure \ref{fig:as_income_2d}). We hypothesize that the higher background risk of obesity associated with lower income may mask risk associated with arsenic exposure. The interaction between arsenic and lead was not significant in the GAM. The GAM provides only one approach for characterizing their functional forms and may itself be limited by model specification and data availability. The interactions identified by PAGP should be viewed as candidates for further investigation, providing a starting point for more targeted scientific analyses and, where necessary, additional data collection.


\begin{figure}
    \centering
    
    \includegraphics[width=0.6\linewidth]{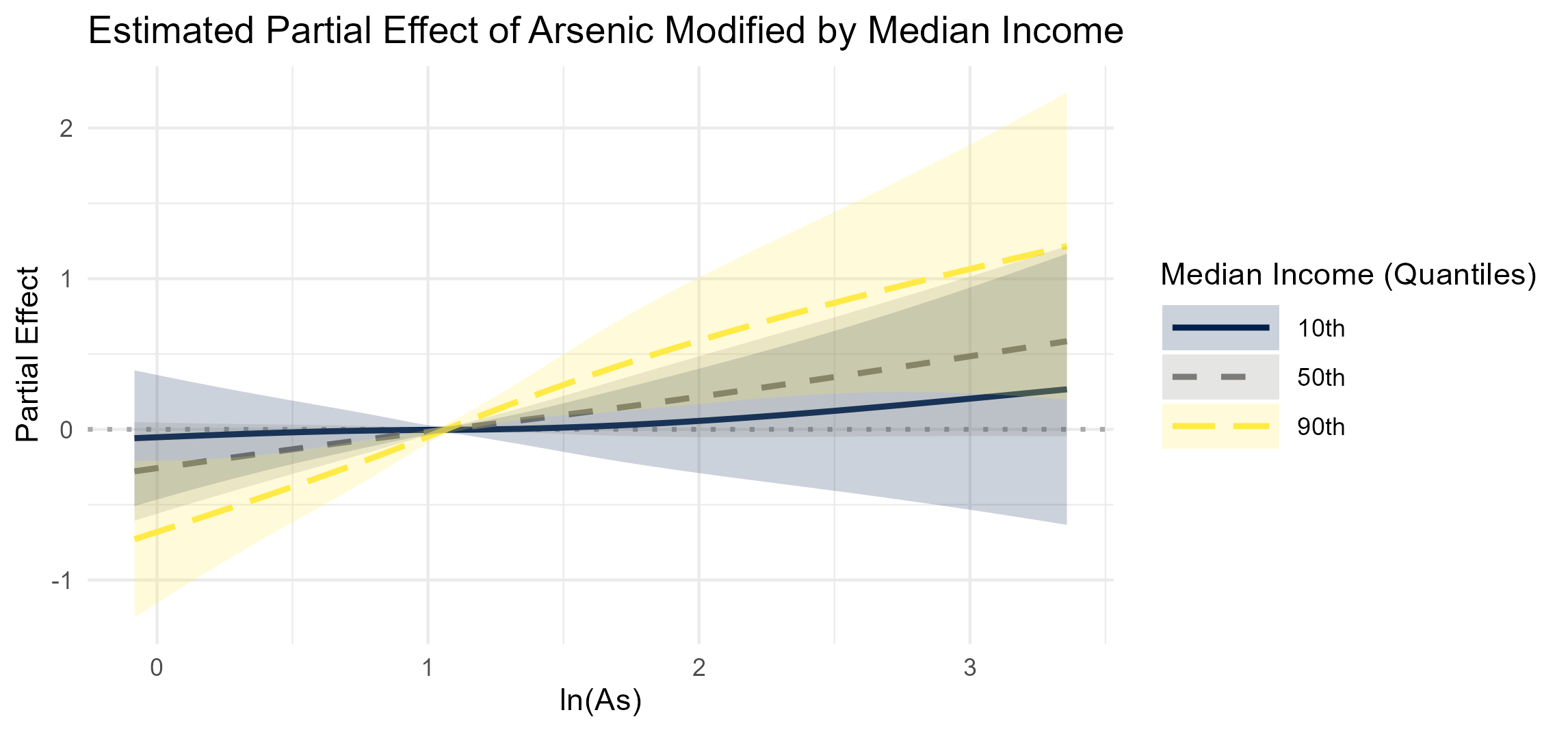}
    \caption{Additional visualization of the arsenic-income interaction effect estimated by the GAM tensor interaction smooth term, using main and interaction effects identified by PAGP hypothesis tests plus all covariates, for the NCWELL data. Note that there are no observations with both high arsenic and high income (Figure \ref{fig:as_income_2d}).}
        \label{fig:NCWELL_gam_as_income}

\end{figure}

\subsection{NHANES 2017-2020: Liver data}\label{s:liver}

Next, we study NHANES 2017--2020 data, focusing on associations between blood concentrations of five metals, mercury, manganese, lead, cadmium, and selenium, and alanine transaminase (ALT) concentration (U/L) in blood serum, similar to \cite{li2023associations}. Metals can accumulate in the liver and cause liver injury, and ALT is a commonly used biomarker of liver injury \citep{li2023associations}. After removing observations with missing values, the sample size is $n=2,507$. Covariates were alcohol use, diabetes, hypertension, sex, race, ethnicity, smoking status, age, body mass index (BMI), energy intake (kcal), cholesterol (mg/dL), high-density lipoprotein cholesterol (HDL-C, mg/dL), and poverty ratio. We natural log-transformed the metals and ALT to stabilize estimates and obtain approximately normal residuals.

Using PAGP and OLS tests, the metals were jointly associated with ALT ($p < 0.001$, both methods). Results of the individual association tests for metals are displayed in Table \ref{tab:liver_variables}.

\begin{table}[!htbp]
\centering
\caption{P-values for tests of association between the concentration of metals in blood and ALT concentration in blood (U/L) among 2,507 individuals in NHANES 2017--2020 data, adjusted for alcohol use, diabetes, hypertension, sex, race, ethnicity, smoking status, age, BMI, energy intake (kcal), cholesterol (mg/dL), HDL-C, and poverty ratio. Because our focus is on hypothesis testing in nonparametric models, we report p-values rather than effect estimates and confidence intervals. Metals and ALT were both natural log-transformed.}
\label{tab:liver_variables}
\begin{tabular}{lrrrr}
\toprule
Metal & PAGP & OLS & GKM & GAM\\
\midrule
Mercury ($\mu$g/L) & 0.191 & 0.051 & 0.114 & 0.117\\
Manganese ($\mu$g/L) & $<0.001$ & 0.044 & 0.002 & 0.115\\
Lead ($\mu$g/dL) & $<0.001$ & 0.078 & 0.001 & 0.058\\
Cadmium ($\mu$g/L) & $<0.001$ & 0.246 & 0.145 & 0.048\\
Selenium ($\mu$g/L) & $<0.001$ & $<0.001$ & $<0.001$ & 0.011\\
\bottomrule
\end{tabular}
\end{table}

PAGP identified significant associations with ALT for manganese, lead, cadmium, and selenium at the 0.05 significance level. The PAGP test for any interaction also found evidence of interactive effects ($p < 0.001$). We then tested individual interactions involving the significant variables. The significant interactions involving metals were manganese with HDL-C, lead with HDL-C, selenium with HDL-C, cadmium with HDL-C, manganese with male sex, and lead with hypertension. Age also interacted with several covariates. Full hypothesis test results are provided in Supplemental Section \ref{s:nhanes_addl}.

We next use GAMs to characterize the significant associations identified by PAGP. The estimated main effects are displayed in Figure \ref{fig:gam_alt_main}. Selenium has an increasing association with ALT, whereas lead has a decreasing association. Lead is known to cause liver injury, which releases ALT into the blood \citep{lakka2023potential}, so the inverse association may reflect confounding or sampling variation; the inverse association persisted using survey-weighted OLS. Using a larger and different NHANES sample, \cite{li2023associations} found each metal to be positively associated with ALT.

Cadmium has a nonlinear association with ALT. PAGP and GAM find evidence of an association between cadmium and ALT, whereas OLS and GKM do not. This result illustrates the potential benefit of testing nonlinear associations that may not be detected by tests of linear association. In contrast, OLS provides stronger evidence for an association between mercury and ALT than the more flexible methods. This association may be approximately linear but weak, in which case more flexible methods may have less power to detect it.

The association between manganese and ALT differs by sex. The GAM suggests little overall association between manganese and ALT but an increasing association among males (Figure \ref{fig:gam_alt_main}). Thus, the PAGP analysis extends the previous analysis of \cite{li2023associations} by identifying evidence that the association between manganese and liver health may differ between men and women.

\begin{figure}
    \centering
    
    \includegraphics[width=0.8\linewidth]{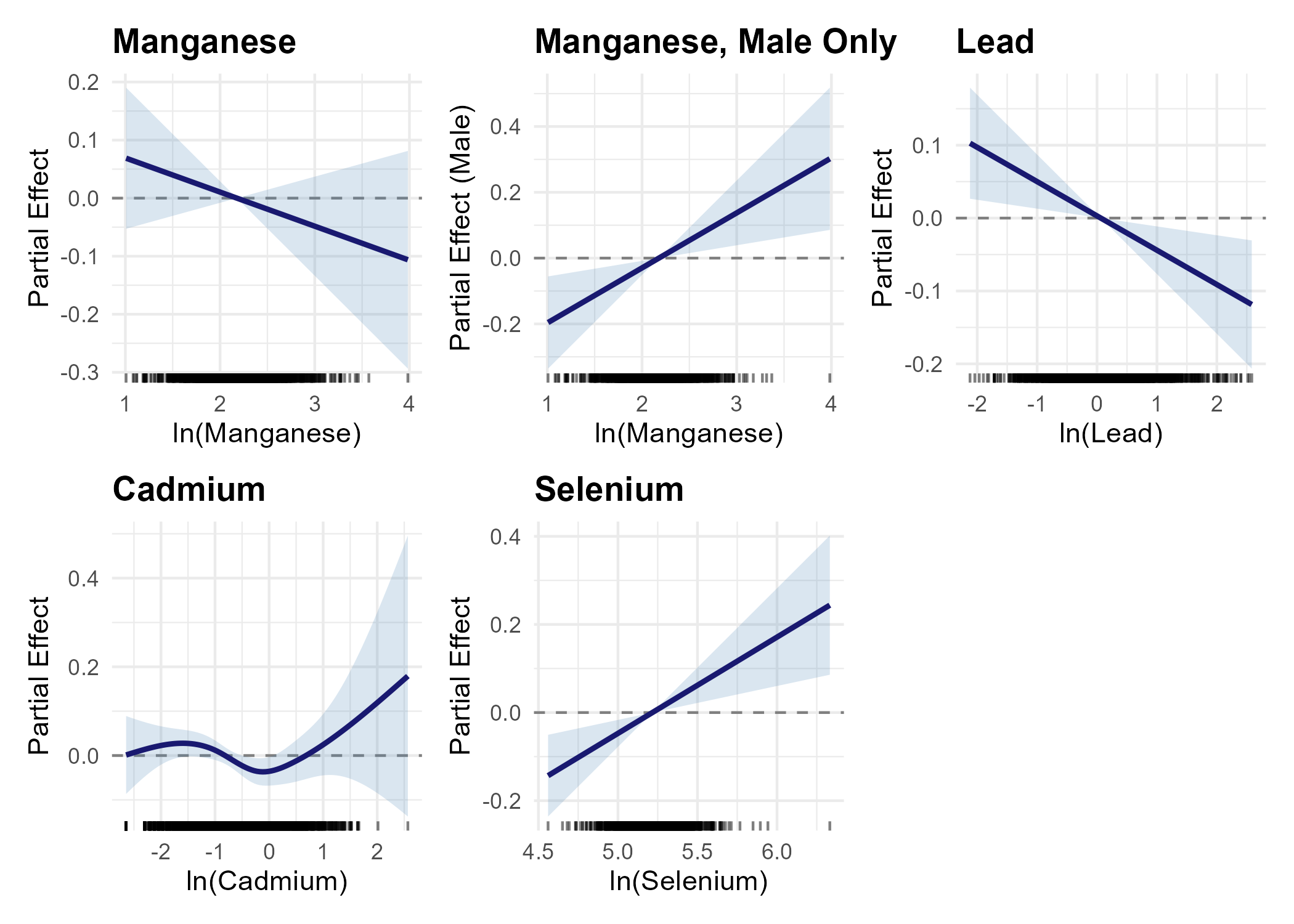}
    \caption{Selected partial univariate effects estimated by GAM smooth terms, using main and interaction effects identified by the PAGP hypothesis tests plus all covariates, for the NHANES 2017--2020 data. The partial effects characterize associations between natural log-transformed metals in blood and ln(ALT) in blood. The manganese-male interaction shows an increasing association between ln(Mn) and ln(ALT) among males.}
    \label{fig:gam_alt_main}
\end{figure}

Finally, we use GAMs to further examine the interactions involving metals identified by PAGP. The estimated manganese-HDL-C and cadmium-HDL-C interaction effects are displayed in Figure \ref{fig:gam_alt_int}. The GAM estimates of both interactions are highly influenced by individual observations in regions with limited data. As in the NCWELL analysis, the GAM provides an exploratory tool for characterizing the interactions identified by PAGP rather than a gold standard for validating them. The interactions identified by PAGP provide candidates for further scientific investigation and may motivate more targeted analyses or additional data collection.

\begin{figure}
    \centering
    
    \includegraphics[width=1\linewidth]{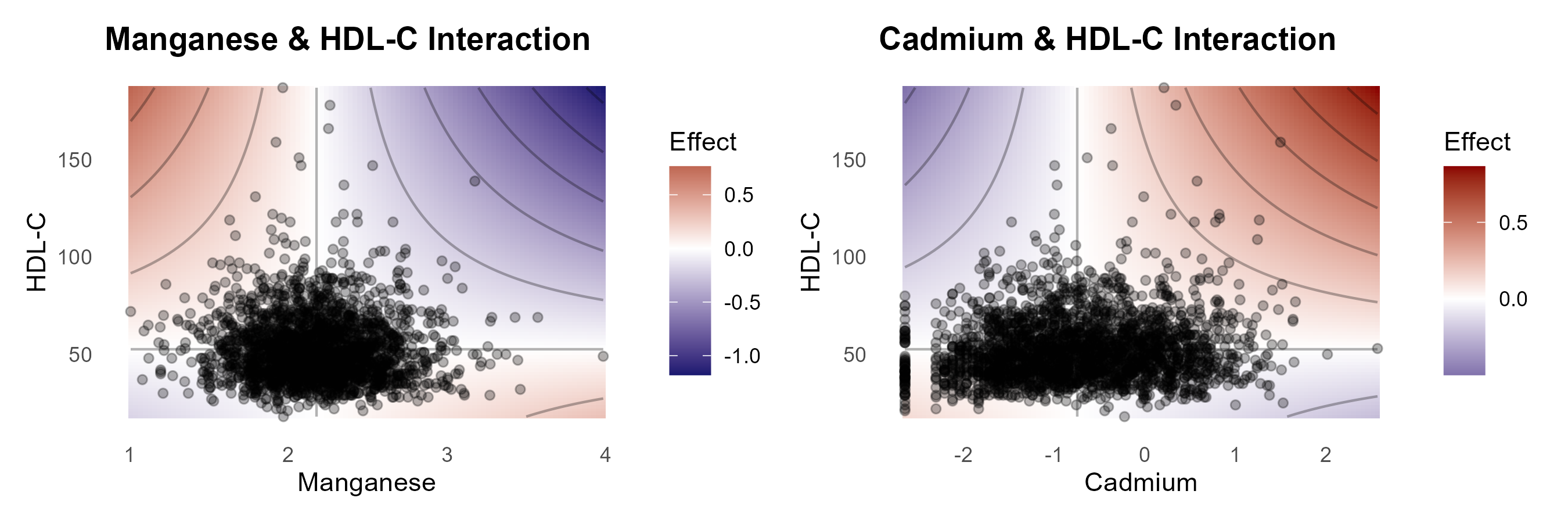}
    \caption{Selected partial interaction effects estimated by GAM smooth terms, using main and interaction effects identified by the PAGP hypothesis tests plus all covariates, for the NHANES 2017--2020 data. The partial effects characterize associations between natural log-transformed metals in blood and ln(ALT) in blood. The estimated manganese-HDL-C and cadmium-HDL-C interactions are highly influenced by individual observations in regions with limited data.}
        \label{fig:gam_alt_int}

\end{figure}

\section{Discussion}

We developed PAGP, a nonparametric framework for hypothesis testing of nonlinear main effects, pairwise interactions, joint associations, and the presence of any interaction. By restricting the GP to main and pairwise interaction effects, PAGP avoids unnecessary higher-order interactions and is less affected by increasing predictor dimension than standard GP-based tests. Simulations showed generally reliable Type I error control and high power across nonlinear and interactive alternatives, with particularly strong advantages as predictor dimension increased. Our theoretical results provide complementary justification for the additive structure: posterior contraction rates depend on the dimensions of the individual additive components rather than the full predictor dimension, while metric entropy results show that PAGP substantially restricts the alternative function space relative to a standard GP. Applications to NC private-well and NHANES data demonstrated the ability of PAGP to identify nonlinear and interactive associations and to generate scientifically relevant candidates for subsequent investigation.

Several limitations motivate future work. We currently use GAMs to characterize and visualize effects identified by PAGP; this secondary analysis is useful for interpretation and diagnostics but is itself model-dependent and should not be viewed as a gold standard for validating the PAGP findings. Rather, significant effects identified by PAGP can serve as a starting point for deeper scientific investigation and, when existing data provide limited information about their functional form, targeted additional data collection. In addition, the orthogonalization of main and interaction effects used here assumes independent predictors, motivating extensions to dependent predictor distributions \citep[e.g.,][]{hooker2007generalized}. Other important directions include scalable approximations to reduce the computational cost of exact GP inference, more flexible covariance parameterizations and testing procedures, and extensions to non-Gaussian outcomes.

\subsection*{Acknowledgements}
Support was provided through the National Institutes of Health (NIH) from the National Institute of Environmental Health Sciences (NIEHS), including grants P42ES031007 and R01ES029531. NW was supported by NIH grant T32ES007018-46. AQ was supported by NIH grant R37 AI029168. DL was supported by NIH grants R01 LM014407, P50 CA058223, and P42 ES031007.
\vspace*{-8pt}


\subsection*{Supplementary Materials}

Web Appendices and Tables referenced in Sections 1-5, and code for implementing the methods and conducting empirical studies are available with this paper at the Biometrics website on Oxford Academic. Code is also available on GitHub at \href{https://github.com/nbwiecha/Pairwise_Additive_GP_Tests}{nbwiecha/Pairwise\_Additive\_GP\_Tests}. \vspace*{-8pt}

\subsection*{Data availability}
The NCWELL data set is available publicly at \url{https://dataverse.unc.edu/dataset.xhtml?persistentId=doi:10.15139/S3/BDQG9O}, and the American Community Survey data is available at \url{https://data.census.gov/}. CDC PLACES data is available at \url{https://www.cdc.gov/places/tools/data-portal.html}. Private well use data is available on reasonable request to the corresponding author. NHANES data can be found at \url{https://wwwn.cdc.gov/nchs/nhanes/Default.aspx}.

\label{lastpage}
\newpage
 \bibliographystyle{rss} 
\bibliography{refs.bib}

@article{vandervaart11a,
  title={Information Rates of Nonparametric Gaussian Process Methods.},
  author={Van Der Vaart, Aad and Van Zanten, Harry},
  journal={Journal of Machine Learning Research},
  volume={12},
  number={6},
  year={2011}
}

@article{van2008rates,
  title={Rates of contraction of posterior distributions based on Gaussian process priors},
  author={Van der Vaart, Aad W and Van Zanten, J Harry},
  year={2008}
}

@article{lorentz1962metric,
  title={Metric entropy, widths, and superpositions of functions},
  author={Lorentz, GG},
  journal={The American Mathematical Monthly},
  volume={69},
  number={6},
  pages={469--485},
  year={1962},
  publisher={Taylor \& Francis}
}

@article{maity2011powerful,
  title={Powerful tests for detecting a gene effect in the presence of possible gene--gene interactions using garrote kernel machines},
  author={Maity, Arnab and Lin, Xihong},
  journal={Biometrics},
  volume={67},
  number={4},
  pages={1271--1284},
  year={2011},
  publisher={Oxford University Press}
}

@article{liu2022cross,
  title={A cross-validated ensemble approach to robust hypothesis testing of continuous nonlinear interactions: application to nutrition-environment studies},
  author={Liu, Jeremiah Zhe and Deng, Wenying and Lee, Jane and Lin, Pi-i Debby and Valeri, Linda and Christiani, David C and Bellinger, David C and Wright, Robert O and Mazumdar, Maitreyi M and Coull, Brent A},
  journal={Journal of the American Statistical Association},
  volume={117},
  number={538},
  pages={561--573},
  year={2022},
  publisher={Taylor \& Francis}
}

@article{bobb2015bayesian,
  title={Bayesian kernel machine regression for estimating the health effects of multi-pollutant mixtures},
  author={Bobb, Jennifer F and Valeri, Linda and Claus Henn, Birgit and Christiani, David C and Wright, Robert O and Mazumdar, Maitreyi and Godleski, John J and Coull, Brent A},
  journal={Biostatistics},
  volume={16},
  number={3},
  pages={493--508},
  year={2015},
  publisher={Oxford University Press}
}

@book{williams2006gaussian,
    author = {Rasmussen, Carl Edward and Williams, Christopher K. I.},
    title = "{Gaussian Processes for Machine Learning}",
    publisher = {The MIT Press},
    year = {2005},
    month = {11},
    isbn = {9780262256834},
    doi = {10.7551/mitpress/3206.001.0001},
    url = {https://doi.org/10.7551/mitpress/3206.001.0001},
}

@inproceedings{durrande2012additive,
  title={Additive covariance kernels for high-dimensional Gaussian process modeling},
  author={Durrande, Nicolas and Ginsbourger, David and Roustant, Olivier},
  booktitle={Annales de la Facult{\'e} des sciences de Toulouse: Math{\'e}matiques},
  volume={21},
  number={3},
  pages={481--499},
  year={2012}
}

@article{wiecha2026novel,
  title={When are novel methods for analyzing complex chemical mixtures in epidemiology beneficial?},
  author={Wiecha, Nate and Griffith, Emily and Reich, Brian J and Hoppin, Jane A},
  journal={Environmental Epidemiology},
  volume={10},
  number={1},
  pages={e456},
  year={2026},
  publisher={LWW}
}

@InProceedings{lu2022oak,
  title = 	 {Additive {G}aussian Processes Revisited},
  author =       {Lu, Xiaoyu and Boukouvalas, Alexis and Hensman, James},
  booktitle = 	 {Proceedings of the 39th International Conference on Machine Learning},
  pages = 	 {14358--14383},
  year = 	 {2022},
  editor = 	 {Chaudhuri, Kamalika and Jegelka, Stefanie and Song, Le and Szepesvari, Csaba and Niu, Gang and Sabato, Sivan},
  volume = 	 {162},
  series = 	 {Proceedings of Machine Learning Research},
  month = 	 {17--23 Jul},
  publisher =    {PMLR},
  url = 	 {https://proceedings.mlr.press/v162/lu22b.html}
}

@article {Morgensterne037860,
	author = {Morgenstern, Jason Denzil and Buajitti, Emmalin and O{\textquoteright}Neill, Meghan and Piggott, Thomas and Goel, Vivek and Fridman, Daniel and Kornas, Kathy and Rosella, Laura C},
	title = {Predicting population health with machine learning: a scoping review},
	volume = {10},
	number = {10},
	elocation-id = {e037860},
	year = {2020},
	doi = {10.1136/bmjopen-2020-037860},
	publisher = {British Medical Journal Publishing Group},
	issn = {2044-6055},
	URL = {https://bmjopen.bmj.com/content/10/10/e037860},
	eprint = {https://bmjopen.bmj.com/content/10/10/e037860.full.pdf},
	journal = {BMJ Open}
}

@article{mcgee2023bayesian,
  title={Bayesian multiple index models for environmental mixtures},
  author={McGee, Glen and Wilson, Ander and Webster, Thomas F and Coull, Brent A},
  journal={Biometrics},
  volume={79},
  number={1},
  pages={462--474},
  year={2023},
  publisher={Oxford University Press}
}

@book{molnar2025,
  title={Interpretable Machine Learning},
  subtitle={A Guide for Making Black Box Models Explainable},
  author={Christoph Molnar},
  year={2025},
  edition={3},
  isbn={978-3-911578-03-5},
  url={{ https://christophm.github.io/interpretable-ml-book}}
}

@book{hastie2009elements,
  title={The Elements of Statistical Learning},
  author={Hastie, Trevor and Tibshirani, Robert and Friedman, Jerome H and Friedman, Jerome H},
  volume={2},
  year={2009},
  publisher={Springer}
}

@book{gu2013smoothing,
  title={Smoothing spline ANOVA models},
  author={Gu, Chong and Gu, Chong},
  volume={297},
  year={2013},
  publisher={Springer}
}

@book{wood2017generalized,
  title={Generalized additive models: an introduction with R},
  author={Wood, Simon N},
  year={2017},
  publisher={chapman and hall/CRC}
}

@article{breiman2001random,
  title={Random forests},
  author={Breiman, Leo},
  journal={Machine learning},
  volume={45},
  number={1},
  pages={5--32},
  year={2001},
  publisher={Springer}
}

@article{zhang2003hypothesis,
  title={Hypothesis testing in semiparametric additive mixed models},
  author={Zhang, Daowen and Lin, Xihong},
  journal={Biostatistics},
  volume={4},
  number={1},
  pages={57--74},
  year={2003},
  publisher={Oxford University Press}
}

@article{corbeil1976restricted,
  title={Restricted maximum likelihood (REML) estimation of variance components in the mixed model},
  author={Corbeil, Robert R and Searle, Shayle R},
  journal={Technometrics},
  volume={18},
  number={1},
  pages={31--38},
  year={1976},
  publisher={Taylor \& Francis}
}

@article{kanagawa2018gaussianprocesseskernelmethods,
      title={Gaussian Processes and Kernel Methods: A Review on Connections and Equivalences}, 
      author={Motonobu Kanagawa and Philipp Hennig and Dino Sejdinovic and Bharath K Sriperumbudur},
      year={2018},
      eprint={1807.02582},
      archivePrefix={arXiv},
      primaryClass={stat.ML},
      url={https://arxiv.org/abs/1807.02582}, 
      journal={arXiv preprint arXiv:1807.02582}
}

@article{van2009adaptive,
  title={Adaptive Bayesian estimation using a Gaussian random field with inverse gamma bandwidth},
  author={Van der Vaart, Aad W and Van Zanten, J Harry},
  year={2009}
}

@article{wood2013p,
  title={On p-values for smooth components of an extended generalized additive model},
  author={Wood, Simon N},
  journal={Biometrika},
  volume={100},
  number={1},
  pages={221--228},
  year={2013},
  publisher={Oxford University Press}
}

@article{li2023associations,
  title={Associations of blood metals with liver function: Analysis of NHANES from 2011 to 2018},
  author={Li, Wenjie and Li, Xinyan and Su, Jing and Chen, Han and Zhao, Ping and Qian, Haisheng and Gao, Xin and Ye, Qiang and Zhang, Guoxin and Li, Xuan},
  journal={Chemosphere},
  volume={317},
  pages={137854},
  year={2023},
  publisher={Elsevier}
}

@article{mitro2015cross,
  title={Cross-sectional associations between exposure to persistent organic pollutants and leukocyte telomere length among US adults in NHANES, 2001--2002},
  author={Mitro, Susanna D and Birnbaum, Linda S and Needham, Belinda L and Zota, Ami R},
  journal={Environmental health perspectives},
  volume={124},
  number={5},
  pages={651},
  year={2015}
}

@article{gibson2019overview,
  title={An overview of methods to address distinct research questions on environmental mixtures: an application to persistent organic pollutants and leukocyte telomere length},
  author={Gibson, Elizabeth A and Nunez, Yanelli and Abuawad, Ahlam and Zota, Ami R and Renzetti, Stefano and Devick, Katrina L and Gennings, Chris and Goldsmith, Jeff and Coull, Brent A and Kioumourtzoglou, Marianthi-Anna},
  journal={Environmental Health},
  volume={18},
  number={1},
  pages={76},
  year={2019},
  publisher={Springer}
}

@article{
	ngiangi_bakwin2001semiparametric,
	author = {Ngianga-Bakwin Kandala and Ludwig Fahrmeir and Stefan Lang and Stephan Klasen},
	title = {Semiparametric Analysis of the Socio-Demographic and Spatial Determinants of Chronic Undernutrition in Two African Countries},
	year = {2001},
    journal={Research in Official Statistics},
    volume={4},
    number={1},
    pages={81-100}
}

@article{Simpson2024, doi = {10.21105/joss.06962}, url = {https://doi.org/10.21105/joss.06962}, year = {2024}, publisher = {The Open Journal}, volume = {9}, number = {104}, pages = {6962}, author = {Simpson, Gavin L.}, title = {gratia: An R package for exploring generalized additive models}, journal = {Journal of Open Source Software} }

@book{Paulsen_Raghupathi_2016, place={Cambridge}, series={Cambridge Studies in Advanced Mathematics}, title={An Introduction to the Theory of Reproducing Kernel Hilbert Spaces}, publisher={Cambridge University Press}, author={Paulsen, Vern I. and Raghupathi, Mrinal}, year={2016}, collection={Cambridge Studies in Advanced Mathematics}}

@article{duvenaud2011additivegaussianprocesses,
      title={Additive Gaussian Processes}, 
      author={David Duvenaud and Hannes Nickisch and Carl Edward Rasmussen},
      year={2011},
      eprint={1112.4394},
      archivePrefix={arXiv},
      primaryClass={stat.ML},
      url={https://arxiv.org/abs/1112.4394}, 
      journal={arXiv preprint arXiv:1112.4394}
}

@article{sobol1990sensitivity,
  title={On sensitivity estimation for nonlinear mathematical models},
  author={Sobol', Il'ya Meerovich},
  journal={Matematicheskoe modelirovanie},
  volume={2},
  number={1},
  pages={112--118},
  year={1990},
  publisher={Russian Academy of Sciences, Branch of Mathematical Sciences}
}

@article{eaves2022analysis,
  title={Analysis of the novel NCWELL database highlights two decades of co-occurrence of toxic metals in North Carolina private well water: Public health and environmental justice implications},
  author={Eaves, Lauren A and Keil, Alexander P and Rager, Julia E and George, Andrew and Fry, Rebecca C},
  journal={Science of the Total Environment},
  volume={812},
  pages={151479},
  year={2022},
  publisher={Elsevier}
}

@book{dieter2018estimated,
  title={Estimated use of water in the United States in 2015},
  author={Dieter, Cheryl A and Maupin, Molly A and Caldwell, Rodney R and Harris, Melissa A and Ivahnenko, Tamara I and Lovelace, John K and Barber, Nancy L and Linsey, Kristin S},
  number={1441},
  year={2018},
  publisher={US Geological Survey}
}

@misc{astdr_priority,
title={Substance Priority List},
author={ATSDR},
publisher={ATSDR},
year={2025}}

@article{naujokas2013broad,
  title={The broad scope of health effects from chronic arsenic exposure: update on a worldwide public health problem},
  author={Naujokas, Marisa F and Anderson, Beth and Ahsan, Habibul and Aposhian, H Vasken and Graziano, Joseph H and Thompson, Claudia and Suk, William A},
  journal={Environmental health perspectives},
  volume={121},
  number={3},
  pages={295},
  year={2013}
}

@article{eick2020arsenic,
  title={Arsenic and obesity: a review of causation and interaction},
  author={Eick, Stephanie M and Steinmaus, Craig},
  journal={Current environmental health reports},
  volume={7},
  number={3},
  pages={343--351},
  year={2020},
  publisher={Springer}
}

@article{plescia2014reducing,
  title={Reducing health disparities by addressing social determinants of health: the Mecklenburg County experience},
  author={Plescia, Marcus and Emmanuel, Cheryl},
  journal={North Carolina Medical Journal},
  volume={75},
  number={6},
  pages={417--421},
  year={2014},
  publisher={NCIOM, The Duke Endowment}
}

@article{barber2020place,
  title={Place matters: from health and health care disparities to equity and liberation},
  author={Barber, Sharrelle},
  journal={North Carolina Medical Journal},
  volume={81},
  number={3},
  pages={173--176},
  year={2020},
  publisher={NCIOM, The Duke Endowment}
}

@article{greenlund2022places,
  title={PLACES: local data for better health},
  author={Greenlund, Kurt J and Lu, Hua and Wang, Yan and Matthews, Kevin A and LeClercq, Jennifer M and Lee, Benjamin and Carlson, Susan A},
  journal={Preventing chronic disease},
  volume={19},
  pages={E31},
  year={2022}
}

@misc{acs2019_5yr,
  author       = {{U.S. Census Bureau}},
  title        = {American Community Survey 2015--2019 5-Year Estimates},
  year         = {2020},
  publisher    = {U.S. Department of Commerce},
  howpublished = {\url{https://data.census.gov/}},
  note         = {Accessed: 2026-07-23}
}

@article{zangiabadian2024association,
  title={The association between heavy metal exposure and obesity: A systematic review and meta-analysis},
  author={Zangiabadian, Moein and Jolfayi, Amir Ghaffari and Nejadghaderi, Seyed Aria and Amirkhosravi, Ladan and Sanjari, Mojgan},
  journal={Journal of Diabetes \& Metabolic Disorders},
  volume={23},
  number={1},
  pages={11--26},
  year={2024},
  publisher={Springer}
}

@article{tinkov2021adipotropic,
  title={Adipotropic effects of heavy metals and their potential role in obesity},
  author={Tinkov, Alexey A and Aschner, Michael and Ke, Tao and Ferrer, Beatriz and Zhou, Ji-Chang and Chang, Jung-Su and Santamar{\'\i}a, Abel and Chao, Jane CJ and Aaseth, Jan and Skalny, Anatoly V},
  journal={Faculty reviews},
  volume={10},
  pages={32},
  year={2021}
}

@article{lakka2023potential,
  title={Potential diagnostic biomarkers for lead-induced hepatotoxicity and the role of synthetic chelators and bioactive compounds},
  author={Lakka, Netranandini and Pai, Bhagyashree and Mani, Monica Shirley and Dsouza, Herman Sunil},
  journal={Toxicology Research},
  volume={12},
  number={2},
  pages={178--188},
  year={2023},
  publisher={Oxford University Press}
}

@book{casella2024statistical,
  title={Statistical inference},
  author={Casella, George and Berger, Roger},
  year={2024},
  publisher={CRC press}
}

@book{boos2013essential,
  title={Essential statistical inference: theory and methods},
  author={Boos, Dennis D and Stefanski, Leonard A},
  volume={591},
  year={2013},
  publisher={Springer}
}

@book{agresti2015foundations,
  title={Foundations of linear and generalized linear models},
  author={Agresti, Alan},
  year={2015},
  publisher={John Wiley \& Sons}
}

@article{kong2020use,
  title={The use of small area estimates in place-based health research},
  author={Kong, Amanda Y and Zhang, Xingyou},
  journal={American Journal of Public Health},
  volume={110},
  number={6},
  pages={829--832},
  year={2020},
  publisher={American Public Health Association}
}

@article{zhang2014multilevel,
  title={Multilevel regression and poststratification for small-area estimation of population health outcomes: a case study of chronic obstructive pulmonary disease prevalence using the behavioral risk factor surveillance system},
  author={Zhang, Xingyou and Holt, James B and Lu, Hua and Wheaton, Anne G and Ford, Earl S and Greenlund, Kurt J and Croft, Janet B},
  journal={American journal of epidemiology},
  volume={179},
  number={8},
  pages={1025--1033},
  year={2014},
  publisher={Oxford University Press}
}

@article{murray2025machine,
  title={A machine learning approach to estimate domestic use of public and private water sources in the United States},
  author={Murray, Andrew and Hall, Alexander and Riveros-Iregui, Diego},
  journal={Water research},
  volume={276},
  pages={123171},
  year={2025},
  publisher={Elsevier}
}

@article{hooker2007generalized,
  title={Generalized functional anova diagnostics for high-dimensional functions of dependent variables},
  author={Hooker, Giles},
  journal={Journal of computational and graphical statistics},
  volume={16},
  number={3},
  pages={709--732},
  year={2007},
  publisher={Taylor \& Francis}
}

@article{GpGp,
author = {Joseph Guinness},
title = {{Permutation and grouping methods for sharpening Gaussian Process approximations}},
journal = {Technometrics},
volume = {60},
number = {4},
pages = {415--429},
year = {2018},
publisher = {Taylor \& Francis},
doi = {10.1080/00401706.2018.1437476},

    note ={PMID: 31447491},


URL = { 
    
        https://doi.org/10.1080/00401706.2018.1437476
    
    

},
eprint = { 
    
        https://doi.org/10.1080/00401706.2018.1437476
    
    

}

}
\clearpage 

\begin{center}
    \LARGE\textbf{Supplemental Materials for: Hypothesis Testing for Nonlinear and Interactive Effects Using Additive Gaussian Processes}
\end{center}
\vspace{2em}

\setcounter{equation}{0}
\setcounter{figure}{0}
\setcounter{table}{0}
\setcounter{section}{0}
\setcounter{page}{1} 

\renewcommand{\theequation}{S\arabic{equation}}
\renewcommand{\thefigure}{S\arabic{figure}}
\renewcommand{\thetable}{S\arabic{table}}
\renewcommand{\thesection}{S\arabic{section}}

\renewcommand{\theHequation}{S\arabic{equation}}
\renewcommand{\theHfigure}{S\arabic{figure}}
\renewcommand{\theHtable}{S\arabic{table}}
\renewcommand{\theHsection}{S\arabic{section}}

\section{GKM as a test of linear association}\label{s:supp_gkm_lin}

\subsection{Summary}
The garrote kernel used in \cite{maity2011powerful} to test $H_0$: $x_1$ unassociated with $y$ is $$K_g(\mathbf{x}, \mathbf{x}')=\exp\left\{\frac{-1}{\rho}\left(\delta(x_1 - x_1')^2 + \sum_{j=2}^p (x_j - x_j')^2 \right)\right\}.$$ Using this garrote kernel, at $\delta=0$, $x_1$ is removed from the model, as in our approach. However, using this garrote parameter has the result of simultaneously determining $x_1$'s removal or inclusion in the model, and smoothing the model with respect to $x_1$. At a high level, the score test for $H_0$: $\delta=0$ is comparing the fit of two models: one with $x_1$ removed, and one with $\delta$ infinitesimally larger than 0, which includes $x_1$ but is essentially infinitely smoothed. This infinitely smoothed model turns out to allow only a linear association between $x_1$ and $y$, as we explain in more detail below.

\subsection{Detailed explanation}
The test statistic used in \cite{maity2011powerful}, \cite{liu2022cross}, and our paper is

$$T_n = (\mathbf{Y}-\mathbf{Z}\tilde{\boldsymbol{\beta}})^T\mathbf{V}_0^{-1}\left(\frac{\partial}{\partial \delta}\mathbf{V}_{\tilde{\theta}}  \right)\mathbf{V}_0^{-1}(\mathbf{Y}-\mathbf{Z}\tilde{\boldsymbol{\beta}}),$$

which in \cite{maity2011powerful} is

$$T_n = (\mathbf{Y}-\mathbf{Z}\tilde{\boldsymbol{\beta}})^T\mathbf{V}_0^{-1}\left(-\lambda\mathbf{D}\odot\boldsymbol{\Sigma}_{-1} \right)\mathbf{V}_0^{-1}(\mathbf{Y}-\mathbf{Z}\tilde{\boldsymbol{\beta}}),$$
where $\mathbf{D}_{ij}=(x_{i1}-x_{j1})^2$, $\odot$ is the element-wise (Hadamard) product, and $\boldsymbol{\Sigma}_{-1}$ is the squared-exponential covariance computed without $\mathbf{x}_1$. We can write $-\mathbf{D}= 2\mathbf{x}_1\mathbf{x}_1^T - (\mathbf{x}_1\odot\mathbf{x}_1)\mathbf{1}^T - \mathbf{1}(\mathbf{x}_1\odot\mathbf{x}_1)^T$. Then $-\lambda\mathbf{D}\odot\boldsymbol{\Sigma}_{-1} = \lambda\left( 2\mathbf{x}_1\mathbf{x}_1^T - (\mathbf{x}_1\odot\mathbf{x}_1)\mathbf{1}^T - \mathbf{1}(\mathbf{x}_1\odot\mathbf{x}_1)^T \right)\odot \boldsymbol{\Sigma}_{-1} =2\lambda\mathbf{x}_1\mathbf{x}_1^T \odot \boldsymbol{\Sigma}_{-1} - \lambda\text{diag}(\mathbf{x}_1^2)\boldsymbol{\Sigma}_{-1} - \lambda \boldsymbol{\Sigma}_{-1} \text{diag}(\mathbf{x}_1^2).$ This is identical to the test statistic if $\frac{\partial}{\partial \delta}\mathbf{V}_{\tilde{\theta}}=2\lambda\mathbf{x}_1\mathbf{x}_1^T \odot \boldsymbol{\Sigma}_{-1} - \lambda\text{diag}(\mathbf{x}_1^2)\boldsymbol{\Sigma}_{-1} - \lambda \boldsymbol{\Sigma}_{-1} \text{diag}(\mathbf{x}_1^2)$. Notably, this is the case if the garrote $\delta$ multiplies the matrix $2\lambda\mathbf{x}_1\mathbf{x}_1^T \odot \boldsymbol{\Sigma}_{-1} - \lambda\text{diag}(\mathbf{x}_1^2)\boldsymbol{\Sigma}_{-1} - \lambda \boldsymbol{\Sigma}_{-1} \text{diag}(\mathbf{x}_1^2)$, which we will denote $\boldsymbol{\Sigma}'_1$. We can understand the test in \cite{maity2011powerful} by understanding the terms of $\boldsymbol{\Sigma}'_1$, because if $\delta$ multiplies $\boldsymbol{\Sigma}_1'$, these terms are being tested for importance to the model, as in our additive GP approach. The first is a Hadamard product between a linear kernel for $\mathbf{x}_1$ and a squared-exponential kernel for the remaining terms; this models a tensor-product interaction between linear functions of $x_1$ and nonlinear functions of the remaining predictors. The second two terms are simply rescalings of $\boldsymbol{\Sigma}_{-1}$ and therefore do not model effects of $x_1$. As a result, the first term is the only one modeling effects of $x_1$ and it restricts them to be linear.


\section{Methods: Additional implementation details and discussion}\label{s:supp_pagp_details}

\subsection{Orthogonalization of main and interaction effects}\label{s:orth}

We use the standard scheme described in \cite{gu2013smoothing} and \cite{wood2017generalized} to orthogonalize the main and interaction effects using computed matrices rather than the kernel functions. 
%
After computing the 1-D base covariance matrices $\boldsymbol{\Sigma}_j$, these are modified so that all estimates of $f_j$ sum to zero, by column- and row-centering; denote this centered version by $\tilde{\boldsymbol{\Sigma}}_j$. Then, $\tilde{\boldsymbol{\Sigma}}_{jk}$ is the Hadamard (element-wise) product of $\tilde{\boldsymbol{\Sigma}}_j$ and $\tilde{\boldsymbol{\Sigma}}_k$, $\boldsymbol{\Sigma}_{jk}=\tilde{\boldsymbol{\Sigma}}_j \odot \tilde{\boldsymbol{\Sigma}}_k$. This generates a tensor product space of functions that is orthogonal to the space of main effects generated by the 1-D kernels \citep{gu2013smoothing, wood2017generalized}. This is similar to the orthogonalized kernel in \cite{lu2022oak} based on the empirical distribution of $\mathbf{x}$. This implicitly assumes that the predictors are uncorrelated \citep{lu2022oak}.

\subsection{Approximate null distribution}\label{s:tn_null}

Following \cite{zhang2003hypothesis} and \cite{maity2011powerful}, we use the Satterthwaite approximation to approximate the null distribution of $T_n=(\mathbf{Y}-\mathbf{Z}\tilde{\boldsymbol{\beta}})^T\mathbf{V}_0^{-1}\biggl(\boldsymbol{\Sigma}_1  \biggr)\mathbf{V}_0^{-1}(\mathbf{Y}-\mathbf{Z}\tilde{\boldsymbol{\beta}})$ using and $m\chi^2_d$ random variable, where parameters accented with a tilde are estimated under the null hypothesis, i.e., $\delta$ is restricted to be equal to zero. Summarizing \cite{maity2011powerful}, the information matrix based on the REML criterion of $\boldsymbol{\theta}$ is $\mathbf{I} = \begin{bmatrix}I_{\delta\delta} & I_{\delta\tilde{\theta}^T}\\ I_{\delta \tilde{\theta}} & I_{\tilde{\theta}\tilde{\theta}} \end{bmatrix}$, where $\mathbf{I}_{\theta_j\theta_k}=\text{tr}\{\mathbf{P}_{\tilde{\boldsymbol{\theta}}}\frac{\partial \mathbf{V}_{\boldsymbol{\theta}}}{\partial \theta_j}\mathbf{P}_{\tilde{\boldsymbol{\theta}}}\frac{\partial \mathbf{V}_{\boldsymbol{\theta}}}{\partial \theta_k}\}/2$, and $\mathbf{P}_{\tilde{\boldsymbol{\theta}}} = \boldsymbol{V}^{-1}_{\tilde{\boldsymbol{\theta}}} - \boldsymbol{V}^{-1}_{\tilde{\boldsymbol{\theta}}}\mathbf{Z}^T\{\mathbf{Z}\boldsymbol{V}^{-1}_{\tilde{\boldsymbol{\theta}}}\mathbf{Z}^T\}^{-1}\mathbf{Z}\boldsymbol{V}^{-1}_{\tilde{\boldsymbol{\theta}}}$. Then the efficient information for $\delta$, accounting for estimation of other parameters, is $\widehat{\mathbf{I}}_{\delta \delta} = \mathbf{I}_{\delta\delta}-\mathbf{I}_{\delta\tilde{\theta}}^T\mathbf{I}_{\tilde{\theta}\tilde{\theta}}^{-1}\mathbf{I}_{\delta\tilde{\theta}}$. To match moments of $m\chi^2_d$ and $T_n$, solve $md=\text{tr}\{\mathbf{P}_{\tilde{\boldsymbol{\theta}}}\boldsymbol{\Sigma}_1^*\}/2$ and $2m^2d=\widehat{\mathbf{I}}_{\delta \delta}$, where $\boldsymbol{\Sigma}_1^*$ is defined in Section 3.2.1. Note that because in practice we estimate $\rho$ under $H_1$ and then treat it as fixed (\ref{s:rho}), $\mathbf{I}$ in practice omits entries related to $\rho$.

\subsection{Tests for interaction and joint association}\label{s:extensions}

To test for interaction between $x_j$ and $x_k$, the model is rearranged so that $\delta$ only multiplies the single covariance matrix responsible for modeling the interaction between $x_j$ and $x_k$:
\begin{align*}
    \mathbf{Y} &\sim N(\mathbf{0}, \delta \mathbf{\tilde{\Sigma}}_{jk} + \lambda \mathbf{\tilde{\Sigma}}_{\text{rest}} + \sigma^2\mathbf{I}),
\end{align*}
where $\mathbf{\tilde{\Sigma}}_{jk}$ is defined in \ref{s:orth} and $\mathbf{\tilde{\Sigma}}_{\text{rest}} $ is the sum of all other main effect and interaction covariance base matrices. Then the test of $H_0$: $\delta=0$ is carried out in the same way as for variable importance. This is essentially the test proposed in \cite{liu2022cross} but for pairwise interaction, rather than interaction between groups of variables, and using our pairwise-interactive model for other predictor effects. A global test for any interaction effects can be performed by specifying the model so that $\delta$ multiplies all of the interaction kernels:
\begin{align*}
    \mathbf{Y} &\sim N(\mathbf{0}, \delta \mathbf{\tilde{\Sigma}}_{\text{int}} + \lambda \mathbf{\tilde{\Sigma}}_{\text{main}} + \sigma^2\mathbf{I}),
\end{align*}
where $\mathbf{\tilde{\Sigma}}_{\text{int}}=\sum_{j<k}\mathbf{\tilde{\Sigma}}_{jk}$ is the sum of interaction effect base covariances, and $\mathbf{\tilde{\Sigma}}_{\text{main}}=\sum_{j=1}^p\mathbf{\tilde{\Sigma}}_{j}$ is the sum of main effect base covariances, as defined in \ref{s:orth}. Finally, to test for joint association between $\{x_j: j \in J\}$ for a set of indices $J$, and $y$, we use the model
\begin{align*}
    \mathbf{Y} &\sim N(\mathbf{0}, \delta\left(\sum_{j\in J} \mathbf{\tilde{\Sigma}}_{\text{j}} + \sum_{j,k\in J; j<k}\mathbf{\tilde{\Sigma}}_{\text{jk}} + \sum_{j\in J, k\notin J}\mathbf{\tilde{\Sigma}}_{\text{jk}}\right) + \lambda \mathbf{\tilde{\Sigma}}_{\text{rest}} + \sigma^2\mathbf{I}),
\end{align*}
where $\delta$ now multiplies all main effect and interaction covariance base matrices involving any of the $J$ predictors being tested for significance, and $\mathbf{\tilde{\Sigma}}_{\text{rest}}$ contains all other terms.

\subsection{Avoiding re-estimation of $\rho$}\label{s:rho}
Our data analysis procedure involves testing many hypotheses--of variable importance for $x_1$ through $x_p$, and then for pairwise interactions between any identified important variables, and possibly of multiple variable or interaction effects. Following the procedure in Section 3.2.1, it is necessary to re-estimate the lengthscale $\rho$ for each null hypothesis, typically the most computationally expensive part of estimation. There is little reason to expect that lengthscales estimated under the null are optimal, or even desirable, for hypothesis testing, as the lengthscale used to test the importance of $x_1$ is estimated exclusively using the other predictors, risking under- or over-smoothing with respect to $x_1$. To save computation, and reduce the risk of over- or undersmoothing when conducting hypothesis tests, we treat $\rho$ as a tuning parameter, estimate it once using REML and including all predictors using model (5) and with $\delta=\lambda$, and then treat this as fixed for the hypothesis tests, re-estimating of $\boldsymbol{\beta}$, $\lambda$, and $\sigma^2$ under the null hypothesis for each test.

\subsection{Multiple testing}\label{s:multi_test}

In a data analysis, we propose testing importance of all $p$ predictors, and then testing for pairwise interaction effects between the predictors identified as important. Joint tests of variable importance can be used if there is a natural grouping of the predictors, to reduce the number of hypothesis tests conducted in a first stage and reduce the family-wise Type I error rate; if joint tests identify groups of predictors as important, individual tests of importance can be used next. If there are some predictors of interest and covariates, a natural first step is to conduct a joint test of association between the predictors of interest and the outcome. Similarly, if there are many important predictors, before conducting many tests of interaction between each pair of significant predictors, a global interaction test can be carried out prior to conducting tests of individual interaction effects.

A conservative way to adjust for multiple testing is the Bonferroni correction. Alternatively, \cite{maity2011powerful} proposed a permutation-based method for adjusting p-values, but this is computationally expensive, especially in our data analyses with over 1,000 observations and many predictors. For data analyses where permutation-based methods are infeasible, we recommend using joint tests if natural groupings of predictors exist, the test of global interaction if many tests of pairwise interactions are to be conducted, and Bonferroni correction otherwise.

\subsection{Computing}\label{s:computing}

Although the pairwise-interactive kernel is a sum with $p+{p \choose 2}=(p+p^2)/2$ terms, it can be computed in $O(p)$ operations using a special case of the approach used in \cite{duvenaud2011additivegaussianprocesses}. Define each univariate kernel by $K_j(\mathbf{x}, \mathbf{x}')=\exp\left( -(x_j-x_j')^2/\rho \right)$. Then $K_{j,k}=K_j\times K_k$. Then $$\sum_{j<k}K_{j,k}=\sum_{j<k}K_j\times K_k = \frac{1}{2}\left( \left(\sum_{j=1}^p K_j\right)^2-\sum_{j=1}^pK_j^2 \right).$$ These two summations can be computed in $O(p)$ time. The computation can be carried out using the base covariance matrices $\tilde{\boldsymbol{\Sigma}}_j$. Only the $p$ base 1-D covariance matrices need to be orthogonalized as in Section \ref{s:orth}, so orthogonalization also requires $O(p)$ operations. Each matrix operation is $O(n^2)$, so the total number of operations required to evaluate the kernel is $O(n^2p)$. Gradients with respect to covariance parameters can be efficiently computed similarly. Inverting the covariance matrix during optimization still requires $O(n^3)$ operations.

\subsection{Simplifying assumptions}\label{s:assns}
In the model in Section 3.2.1, we use a common lengthscale $\rho$ for all predictors $x_j$, a common variance parameter $\lambda$ for all effects not including $x_1$, and a common variance parameter $\delta$ for all effects including $x_1$. These strong simplifying assumptions greatly reduce the number of parameters that must be estimated compared to additive GPs whose goal is predictive accuracy such as \cite{duvenaud2011additivegaussianprocesses, durrande2012additive, lu2022oak}. It is also not possible to estimate a more richly parameterized model under $H_0$: $\delta=0$ because some parameters may not be identifiable under the null, for example, a lengthscale parameter specific to $x_1$ cannot be estimated when $x_1$ is removed from the model. 
While these simplifying assumptions reduce predictive accuracy of our model, our simulations show good hypothesis testing performance.
\section{Proofs}\label{s:supp_proof}

We assume the additive GP on $\mathbb{R}^p$ is of the form $W=\sum_{l=1}^L W_l$, where the $W_l$ are GPs with Mat\`ern covariance $K_l$. Denote by $\mathbf{x}_l$ a subvector of $\mathbf{x}$ specific to each $l$ with length $p_l$, and let $K_l:\mathbb{R}^p \times \mathbb{R}^p \rightarrow \mathbb{R}$ be defined as $\overline{K}_l(\mathbf{x}_l, \mathbf{x}_l')$ for some covariance function $\overline{K}:\mathbb{R}^{p_l} \times \mathbb{R}^{p_l} \rightarrow \mathbb{R}$. For example, for an additive GP including terms for each pair of predictors, and Mat\`ern covariance with some smoothness, the $\overline{K}_l$ would be equal to the 2-D Mat\`ern covariance function applied to each combination of two indices $j,k$ in $1, ..., p$.

\subsection{Notation and definitions}

We use the same notation as in \cite{vandervaart11a}. The empirical norm is defined as $\|f\|_n=(\frac{1}{n}\sum_{i=1}^nf^2(x_i))^{1/2}.$ If $a \leq Cb$ for some constant $C$, we write $a \lesssim b$. $\max\{a,b\}$ is denoted $a \vee b$ and $\min\{a,b\}$ is denoted $a \wedge b$. The reproducing kernel Hilbert space (RKHS) associated with a kernel $K$ is denoted $\mathcal{H}_K$, and the norm associated with this RKHS is denoted $\| \cdot \|_{\mathcal{H}_K}$.

We use $C^\alpha(\mathcal{X})$ to denote the H\"older space of functions, and $H^\alpha(\mathcal{X})$ to denote the Sobolev space of functions, $f_0: \mathcal{X}\rightarrow \mathbb{R}$. See \cite{vandervaart11a} for definitions of these spaces; loosely speaking, they are each spaces of functions that have $\alpha$ partial derivatives, plus additional constraints on the regularity of those derivatives.

\subsection{Contraction rate of additive GP}\label{s:contract_proof}

We use the results of \cite{van2008rates} and \cite{vandervaart11a}, combined with results on composition kernels in \cite{Paulsen_Raghupathi_2016}. 

By Theorem 1 of \cite{vandervaart11a}, it is sufficient to bound the concentration function of a GP in order to obtain its contraction rate for fixed design. In \cite{vandervaart11a}, for a fixed target function $f_0$, GP $W$ with kernel $K$, GP prior probability function $\Pi_W$, and RKHS associated with the GP $\mathcal{H}_K$, the concentration function is defined as
\begin{align*}
    \phi_{f_0}^W(\epsilon) = \inf_{h\in\mathcal{H}_K: \|h - f_0\|_{\infty}<\epsilon}\|h\|^2_{\mathcal{H}_K} - \log \Pi_W (f : \|f\|_{\infty}<\epsilon).
\end{align*}
An associated function is defined as
\begin{align*}
    \psi_{f_0}^W(\epsilon) = \frac{\phi_{f_0}^W(\epsilon)}{\epsilon^2},
\end{align*}
and its generalized inverse is defined as $(\psi_{f_0}^W)^{-1}(z) = \sup\{\epsilon>0:\psi_{f_0}^W(\epsilon) \geq z\}$. The logarithmic term of the concentration function is the exponent of the small-ball probability of the GP $W$, and the infimum measures the decrease in prior mass when the small ball is shifted to the target function $f_0$. Theorem 1 of \cite{vandervaart11a} states that if $f_0$ is a continuous and bounded function on a compact metric space $\mathcal{X}$, and predictors are considered fixed, then
\begin{align}\label{e:contract}
    E_{f_0}\int \|f - f_0\|_n^2d\Pi_n(f|Y_{1:n}) \lesssim \left( (\psi_{f_0}^W)^{-1}(n) \right)^2,
\end{align}
which by Markov's inequality shows the posterior contracts around $f_0$ at the rate $(\psi_{f_0}^W)^{-1}(n)$ in the empirical norm.

To show this, let the contraction rate be $\epsilon_n = (\psi_{f_0}^W)^{-1}(n)$. Theorem 1 implies that $E_{f_0} \int \|f - f_0\|_n^2 \, d\Pi_n(f \mid Y_{1:n}) \leq C\epsilon_n^2$ for some constant $C > 0$. 

To satisfy the formal definition of posterior contraction, we must show that the posterior mass outside a sequence of neighborhoods of size $M_n\epsilon_n$ vanishes, where $M_n \to \infty$ as $n \to \infty$. By Markov's inequality:
\[
\Pi_n(\|f - f_0\|_n > M_n\epsilon_n \mid Y_{1:n}) \leq \frac{1}{M_n^2\epsilon_n^2} \int \|f - f_0\|_n^2 \, d\Pi_n(f \mid Y_{1:n})
\]

Taking the frequentist expectation $E_{f_0}$ on both sides and substituting the quadratic risk bound:
\[
E_{f_0} \Pi_n(\|f - f_0\|_n > M_n\epsilon_n \mid Y_{1:n}) \leq \frac{1}{M_n^2\epsilon_n^2} (C\epsilon_n^2) = \frac{C}{M_n^2}
\]

As $n \to \infty$, the sequence $M_n \to \infty$. Therefore, the upper bound $C/M_n^2 \to 0$, formally establishing posterior contraction at the rate $\epsilon_n$.

\cite{vandervaart11a} established contraction rates for the Mat\'ern kernel, which we will use.

We now consider the additive GP, $W=\sum_{l=1}^LW_l$, each with covariance function $K_l$ and each $W_l$ defined on the same space $\mathcal{X}$. Assume that the target function $f_0$ can be written as a sum of functions, $f_0 = \sum_{l=1}^Lf_l$, where each $f_l$ must have domain $\mathcal{X}$. Assume that for each $l=1, ..., L$, we can define a subvector $\mathbf{x}_l$ of $\mathbf{x}$ with length $p_l$, and a function $g_l:\mathbb{R}^{p_l}\rightarrow \mathbb{R}$, such that $f_l(\mathbf{x}) =g_l(\mathbf{x}_l)$, in other words, the value of $f_l$ is only determined by the elements of the subvector $\mathbf{x}_l$. Let each $W_l$ have concentration function $\phi^{W_l}_{f_l}$ for $f_{l}$ in the same separable Banach space of functions. Theorem 2.3 of \cite{van2008rates} states that the concentration function of $W$ satisfies
\begin{align*}
    \phi_{f_0}^W(\epsilon L) \leq 2\sum_{l=1}^L\phi^{W_l}_{f_l}(\epsilon/2),
\end{align*}
in other words, if we can bound the individual concentration functions of each $W_l$ with respect to the corresponding $f_{l}$, we can bound the concentration function of the additive GP $W$ and obtain its contraction rate. In Lemma \ref{lem:gp_component}, we bound the concentration function for the component GPs $W_l$.

\begin{lemma}\label{lem:gp_component}
    Let $W_l$ be a GP on $\mathbb{R}^p$. Let $\mathbf{x}_l$ be a subvector of length $p_l$ of $\mathbf{x} \in\mathcal{X}\subset \mathbb{R}^p$, with $\mathbf{x}_l\in\mathcal{X}_l\subset\mathbb{R}^{p_l}$. Let $f_{l}$ be a function on $\mathbb{R}^p$ that is both Sobolev and H\"older smooth with smoothness $\beta_l$, and whose values depend only on the values of $\mathbf{x}_l$. In other words, for some function ${g}_{l}:\mathbb{R}^{p_l}\rightarrow\mathbb{R}$ and function $\varphi_l(\mathbf{x})=\mathbf{x}_l$, $f_{l} = {g}_{l}\circ \varphi_l$. Let the covariance of $W_l$, $K_l$, be defined by the Mat\'ern covariance function with smoothness $\alpha_l$ applied to $\mathbf{x}_l$. Then the concentration function of $W_l$ satisfies
    \begin{align*}
        \phi_{f_{l}}^{W_l}(\epsilon) \lesssim \left(\frac{1}{\epsilon}\right)^{(2\alpha_l + p_l - 2\beta_l)/\beta_l} + \left(\frac{1}{\epsilon}\right)^{p_l/\alpha_l}.
    \end{align*}
\end{lemma}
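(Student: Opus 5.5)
The plan is to reduce Lemma \ref{lem:gp_component} to the known $p_l$-dimensional Mat\'ern result of \cite{vandervaart11a}. The reduction uses the composition structure $W_l = \overline{W}_l\circ\varphi_l$, where $\overline{W}_l$ is a GP on $\mathcal{X}_l\subset\mathbb{R}^{p_l}$ with Mat\'ern covariance $\overline{K}_l$. By construction $K_l(\mathbf{x},\mathbf{x}') = \overline{K}_l(\varphi_l(\mathbf{x}),\varphi_l(\mathbf{x}'))$, so $W_l$ has the same law as $\overline{W}_l\circ\varphi_l$. The idea is to show that both terms in the concentration function $\phi^{W_l}_{f_l}(\epsilon)$ equal the corresponding terms for $\overline{W}_l$ and $g_l$. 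The $p_l$-dimensional bound then transfers verbatim.

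\textbf{Small-ball term.} Since $\varphi_l:\mathcal{X}\to\mathcal{X}_l$ is a coordinate projection, it is surjective onto $\mathcal{X}_l$; if necessary, define $\mathcal{X}_l=\varphi_l(\mathcal{X})$. Then
\[
\sup_{\mathbf{x}\in\mathcal{X}}|W_l(\mathbf{x})| = \sup_{\mathbf{u}\in\mathcal{X}_l}|\overline{W}_l(\mathbf{u})|.
\]
It follows that $\Pi_{W_l}(\|f\|_\infty<\epsilon) = \Pi_{\overline{W}_l}(\|\bar f\|_\infty<\epsilon)$, and the centered small-ball exponents coincide.

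\textbf{RKHS term.} I will invoke the composition (pullback) theorem for RKHSs in \cite{Paulsen_Raghupathi_2016}. For a kernel of the form $\overline{K}\circ(\varphi\times\varphi)$, it gives $\mathcal{H}_{K_l}=\{h\circ\varphi_l: h\in\mathcal{H}_{\overline{K}_l}\}$ with
\[
\|u\|_{\mathcal{H}_{K_l}}=\min\{\|h\|_{\mathcal{H}_{\overline{K}_l}}: h\circ\varphi_l=u\}.
\]
Surjectivity of $\varphi_l$ makes $h$ unique, so this is an isometry. In addition, $\|h\circ\varphi_l - g_l\circ\varphi_l\|_\infty=\|h-g_l\|_\infty$. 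The infimum defining the decentering term therefore matches exactly:
\[
\inf_{u\in\mathcal{H}_{K_l}:\|u-f_l\|_\infty<\epsilon}\|u\|^2_{\mathcal{H}_{K_l}}=\inf_{h\in\mathcal{H}_{\overline{K}_l}:\|h-g_l\|_\infty<\epsilon}\|h\|^2_{\mathcal{H}_{\overline{K}_l}}.
\]
Hence $\phi^{W_l}_{f_l}(\epsilon)=\phi^{\overline{W}_l}_{g_l}(\epsilon)$.

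\textbf{Conclusion.} We have $g_l\in C^{\beta_l}(\mathcal{X}_l)\cap H^{\beta_l}(\mathcal{X}_l)$, and $\overline{W}_l$ is Mat\'ern with smoothness $\alpha_l$ on a $p_l$-dimensional compact domain. The Mat\'ern analysis in \cite{vandervaart11a} (the proof of their Mat\'ern theorem) then applies to $\phi^{\overline{W}_l}_{g_l}$. It gives the small-ball bound $-\log\Pi(\|\overline{W}_l\|_\infty<\epsilon)\lesssim\epsilon^{-p_l/\alpha_l}$. For the decentering term, convolving $g_l$ with a scaled smooth kernel yields an approximant $h$ with $\|h-g_l\|_\infty\lesssim\epsilon$ and $\|h\|^2_{\mathcal{H}}\lesssim\epsilon^{-(2\alpha_l+p_l-2\beta_l)/\beta_l}$. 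Summing the two terms gives the claim.

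The main obstacle is the RKHS identification, together with the bookkeeping of domains. The Mat\'ern results are stated for the RKHS restricted to a compact domain, often after extending $g_l$ to all of $\mathbb{R}^{p_l}$ with Sobolev/H\"older regularity. I need the pullback isometry to be compatible with that restriction; surjectivity of $\varphi_l$ onto $\mathcal{X}_l$ is the key point here. I would also check that the decentering bound in \cite{vandervaart11a} for $\beta_l>\alpha_l$ is consistent with the stated exponent. In that regime the exponent becomes negative, and one instead uses that $g_l$ lies in the RKHS, giving a constant, which the displayed bound still dominates for small $\epsilon$.
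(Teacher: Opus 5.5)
Your proposal is correct and follows the paper's proof essentially step for step: the composition $W_l=\overline{W}_l\circ\varphi_l$, equality of the small-ball probabilities, the Paulsen--Raghupathi pullback RKHS theorem combined with the sup-norm identity, and then Lemmas 3--4 of van der Vaart and van Zanten (2011); the only difference is that the paper uses just the inequality $\|h\circ\varphi_l\|_{\mathcal{H}_{K_l}}\le\|h\|_{\mathcal{H}_{\overline{K}_l}}$, which is all an upper bound needs, instead of your surjectivity-based isometry. One small correction to your side remark: the exponent $(2\alpha_l+p_l-2\beta_l)/\beta_l$ becomes negative, and $g_l$ lies in the Mat\'ern RKHS ($\approx H^{\alpha_l+p_l/2}$), only when $\beta_l\ge\alpha_l+p_l/2$, not whenever $\beta_l>\alpha_l$; still, for any $\beta_l>\alpha_l$ you can use $g_l\in C^{\alpha_l}\cap H^{\alpha_l}$, so the decentering term is $\lesssim\epsilon^{-p_l/\alpha_l}$ and is absorbed by the second term of the bound.
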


\begin{proof}
    Define $W_l$ by letting its sample paths be $W_l(\mathbf{x})=\overline{W}_l(\mathbf{x}_l)$, where $\overline{W}_l$ is a GP on $\mathbb{R}^{p_l}$ with covariance function $\overline{K}_l$ and $\mathbf{x}_l$ is a subvector of $\mathbf{x}$ as defined above. Then $W_l$ is a GP, and the covariance of $W_l$, denoted $K_l$, can be written as a function from $\mathbb{R}^p \times \mathbb{R}^p \rightarrow \mathbb{R}$ by writing it as a composition, $K_l = \overline{K}_l \circ \varphi_l$.

    By Proposition 5.6 of \cite{Paulsen_Raghupathi_2016} $K_l$ is a kernel function, and by Theorem 5.7 of \cite{Paulsen_Raghupathi_2016}, $\mathcal{H}_{K_l} = \{ h_l \circ \varphi_l: h_l \in \mathcal{H}_{\overline{K}_l} \}$, and for $u \in \mathcal{H}_{K_l}, \|u\|_{\mathcal{H}_{K_l}} = \min \{\|f\|_{\mathcal{H}_{\overline{K}_l}}: u = f\circ \varphi_l\}.$ Define $\overline{K}_l: \mathbb{R}^{p_l}\times \mathbb{R}^{p_l} \rightarrow \mathbb{R}$ to be the Mat\'ern covariance in $p_l$ dimensions with smoothness $\alpha_l$.

    Following Lemmas 3 and 4 of \cite{vandervaart11a} (results for Mat\'ern priors), we bound the two terms of the concentration function separately, rewriting our GPs and kernels until we can apply the bounds of Lemmas 3 and 4 of \cite{vandervaart11a}. 
    
    The small-ball exponent can be bounded by rewriting terms. $\Pi_{W_l} (f : \|f\|_{\infty}<\epsilon) = P(\|W_l\|_{\infty} <\epsilon) = P(\sup_{\mathbf{x}\in\mathcal{X}}|W_l(\mathbf{x})|<\epsilon) = P(\sup_{\mathbf{x}_l \in \mathcal{X}_l}|\overline{W}_l(\mathbf{x}_l)|<\epsilon)=P(\|\overline{W}_l\|_{\infty}<\epsilon)=\Pi_{\overline{W}_l}(\overline{f}:\|\overline{f}\|_{\infty}<\epsilon)$. With the small-ball probability of $W_l$ written as a small-ball probability of a standard Mat\'ern GP on $\mathbb{R}^{p_l}$, we can use the bound of $C\left(\frac{1}{\epsilon}\right)^{p_l/\alpha_l}$ for the small-ball exponent for Mat\'ern prior from Lemma 3, \cite{vandervaart11a}.

    For the infimum, by Theorem 5.7 of \cite{Paulsen_Raghupathi_2016}, $\mathcal{H}_{K_l}=\{h_l \circ \varphi_l: h_l \in \mathcal{H}_{\overline{K}_l}\}$, and for any $u = h_l \circ \varphi_l,$
    $$\|u\|_{\mathcal{H}_{K_l}} = \min\{ \|h\|_{H_{\overline{K}_l}}:u=h\circ\varphi_l\} \leq \|h_l\|_{H_{\overline{K}_l}}.$$ Then note that the approximation error in ${H_{{K}_l}}$, $\|u - f_l\|_{\infty, \mathcal{X}} = \sup_{\mathbf{x}\in\mathcal{X}}|h_l(\varphi_l(\mathbf{x})) - g_l(\varphi_l(\mathbf{x}))| = \|h_l - g_l\|_{\infty, \mathcal{X}_l}$. Then we have
    \begin{align*}
        \inf_{u \in \mathcal{H}_{K_l}: \|u-f_l\|_{\infty, \mathcal{X}}<\epsilon}\|u\|^2_{\mathcal{H}_{K_l}} &= \inf_{h_l\in\mathcal{H}_{\overline{K}_l}:\|h_l - g_l\|_{\infty, \mathcal{X}_l}<\epsilon}\|h_l \circ \varphi_l\|^2_{\mathcal{H}_{K_l}}\\
        &\leq \inf_{h_l\in\mathcal{H}_{\overline{K}_l}:\|h_l - g_l\|_{\infty, \mathcal{X}_l}<\epsilon}\|h_l\|^2_{\mathcal{H}_{\overline{K}_l}}.
    \end{align*}
    Lemma 4 of \cite{vandervaart11a} bounds this final quantity by a constant times $\left(\frac{1}{\epsilon}\right)^{(2\alpha_l + p_l-2\beta_l)/\beta_l}$. Therefore the concentration function for $W_l$ around $f_l$ has the same rate as the concentration function for $\overline{W}_l$ around $g_l$, which is found in \cite{vandervaart11a}.
    
\end{proof}

Theorem 1 is proved by combining Theorem 2.3 from \cite{van2008rates} with Lemma \ref{lem:gp_component}.

\begin{proof}[Proof of Theorem 1]
    By Lemma \ref{lem:gp_component}, for each component $l \in \{1, \dots, L\}$, there exists a constant $c_l > 0$ such that the concentration function evaluated at any arbitrary positive scalar $x$ is bounded by:
    \[
    \phi_{f_l}^{W_l}(x) \leq c_l \left[ \left(\frac{1}{x}\right)^{(2\alpha_l+p_l-2\beta_l)/\beta_l} + \left(\frac{1}{x}\right)^{p_l/\alpha_l} \right].
    \]
    
    Theorem 2.3 of \cite{van2008rates} states that for any $\epsilon > 0$, the concentration function of the additive GP satisfies $\phi_{f_0}^W(\epsilon L) \leq 2\sum_{l=1}^L\phi^{W_l}_{f_{l}}(\epsilon/2)$. 
    
    Let $\epsilon' = \epsilon L$, which implies $\epsilon = \epsilon'/L$. Substituting $\epsilon'$ into the theorem gives a bound with respect to $\epsilon'$:
    \[
    \phi_{f_0}^W(\epsilon') \leq 2\sum_{l=1}^L\phi^{W_l}_{f_{l}}\left(\frac{\epsilon'}{2L}\right).
    \]
    
    Applying the bound from Lemma \ref{lem:gp_component} by evaluating it at $x = \frac{\epsilon'}{2L}$, we obtain:
    \[
    \phi_{f_0}^W(\epsilon') \leq 2 \sum_{l=1}^L c_l \left[ \left(\frac{2L}{\epsilon'}\right)^{(2\alpha_l+p_l-2\beta_l)/\beta_l} + \left(\frac{2L}{\epsilon'}\right)^{p_l/\alpha_l} \right].
    \]
    
    To bound this sum, let $\gamma_l = \max \left\{ \frac{2\alpha_l+p_l-2\beta_l}{\beta_l}, \frac{p_l}{\alpha_l} \right\}$ be the maximum exponent for component $l$, and let $\gamma_{max} = \max_{l} \gamma_l$ be the largest exponent across all components. Let $c_{max} = \max_l c_l$ be the largest constant. Asymptotically, we consider sufficiently small $\epsilon'$ such that $\epsilon' < 2L$, ensuring $\frac{2L}{\epsilon'} > 1$. Under this condition, $\left(\frac{2L}{\epsilon'}\right)^x \leq \left(\frac{2L}{\epsilon'}\right)^{\gamma_{max}}$ for any $x \leq \gamma_{max}$. 
    
    We can now bound each term in the summation by the worst-case rate and constant. Replacing each component's bound with this maximum yields:
    \begin{align*}
        \phi_{f_0}^W(\epsilon') &\leq 2 \sum_{l=1}^L c_{max} \left[ \left(\frac{2L}{\epsilon'}\right)^{\gamma_{max}} + \left(\frac{2L}{\epsilon'}\right)^{\gamma_{max}} \right] \\
        &= 2 \sum_{l=1}^L 2 c_{max} \left(\frac{2L}{\epsilon'}\right)^{\gamma_{max}} \\
        &= 4 L c_{max} (2L)^{\gamma_{max}} \left(\frac{1}{\epsilon'}\right)^{\gamma_{max}}.
    \end{align*}
    
    Let $C^* = 4 L c_{max} (2L)^{\gamma_{max}}$, which is a finite positive constant independent of $\epsilon'$. Thus, we have the simplified bound:
    \[
    \phi_{f_0}^W(\epsilon') \leq C^* \left(\frac{1}{\epsilon'}\right)^{\gamma_{max}}.
    \]
    
    Following \cite{vandervaart11a}, the posterior contraction rate $\epsilon_n$ is obtained by finding a sequence that solves the relation $\phi_{f_0}^W(\epsilon_n) \leq n\epsilon_n^2$. By letting our target sequence $\epsilon_n$ take the role of $\epsilon'$ in our established bound, we set:
    \[
    C^* \epsilon_n^{-\gamma_{max}} = n\epsilon_n^2 \implies \epsilon_n^{\gamma_{max} + 2} = C^* n^{-1} \implies \epsilon_n = (C^*)^{1/(\gamma_{max} + 2)} n^{-1/(\gamma_{max} + 2)}.
    \]
    
    Note that minimizing the rate exponent $1/(\gamma_{max}+2)$ is mathematically equivalent to maximizing the concentration exponent $\gamma$. From the definitions of our exponents, calculating $1/(\gamma+2)$ for the two terms in Lemma \ref{lem:gp_component} yields $\beta_l/(2\alpha_l+p_l)$ and $\alpha_l/(2\alpha_l+p_l)$, respectively. Therefore, defining the final rate exponent as the minimum across all components:
    \[
    \eta = \min_{l=1, \dots L} \left\{ \frac{\alpha_l \wedge \beta_l}{2\alpha_l + p_l} \right\},
    \]
    we have $1/(\gamma_{max} + 2) = \eta$. 
    
    Thus, the sequence bounding the relation is $\epsilon_n \asymp n^{-\eta}$. By definition, $\psi_{f_0}^W(\epsilon) = \phi_{f_0}^W(\epsilon) / \epsilon^2$, meaning our relation $\phi_{f_0}^W(\epsilon_n) \leq n\epsilon_n^2$ is equivalent to $\psi_{f_0}^W(\epsilon_n) \leq n$. This implies that the generalized inverse satisfies $(\psi_{f_0}^W)^{-1}(n) \lesssim n^{-\eta}$. As established by Theorem 1 of \cite{vandervaart11a}, the rate of contraction of the posterior distribution is exactly of this order in the case of fixed design. Therefore, the relation (\ref{e:contract}) holds, establishing that the rate of contraction of the posterior of $W$ is of order $n^{-\eta}$.
\end{proof}

\subsection{Proof of Theorem 2}\label{s:size_proof}

\begin{proof}[Proof of Theorem 2]
    Theorem 2 is a direct consequence of established results bounding the metric entropy of spaces comprised of additive functions. Consider a general space of functions defined on a compact domain $\mathcal{S} \subset \mathbb{R}^p$ belonging to the Hölder class $\mathcal{H}_\alpha^p$, where the smoothness is given by $\alpha = m + \nu$ for an integer $m$ and $\nu \in (0, 1]$. As established by \cite{lorentz1962metric}, the size of this space can be quantified by its metric entropy, which is proportional to $\epsilon^{-p/\alpha}$. 
    
    This general rate characterizes the size of our component spaces. Specifically, if a function $f$ belongs to a space $\Theta_j$ such that $f \in C^\beta(\mathcal{X}_j) \cap H^\beta(\mathcal{X}_j)$, the metric entropy of $\Theta_j$ is proportional to $\epsilon^{-p_j/\beta}$.

    We now extend this to the additive structure used in the pairwise Gaussian process. Suppose $\Theta_j$ is a space of functions $f$ such that $f = \sum_{l=1}^L g_l$, where each component function $g_l$ operates on a 2-dimensional subspace and satisfies $g_l \in C^\beta(\mathcal{X}_l) \cap H^\beta(\mathcal{X}_l)$. By the component bound, each 2-D Hölder space has a metric entropy of order $\epsilon^{-2/\beta}$. 
    
    Following Equation (6) on page 484 of \cite{lorentz1962metric}, the metric entropy of a function space created by taking the direct sum of $L$ such 2-D spaces scales at the same rate as the individual component spaces. Because the number of additive components $L$ is finite, the metric entropy of the direct sum space $\Theta_j$ is bounded by $O(L\epsilon^{-2/\beta}) = O(\epsilon^{-2/\beta})$.
\end{proof}

\section{NHANES 2001-2002: Telomere data}\label{s:telomere}

This data set has been used as a case study for methods for environmental exposure mixtures \citep{mcgee2023bayesian, gibson2019overview}. The outcome variable is natural-log mean leukocyte telomere length (LTL), which indicates cellular aging \citep{mitro2015cross}. We use the same sample of 1,003 individuals described in \citep{mitro2015cross} used in the subsequent analyses \citep{gibson2019overview,mcgee2023bayesian}.  The exposure variables are 18 persistent organic pollutants (POPs) and per \cite{mcgee2023bayesian}, have a natural grouping based on hypothesized health effects. PCBs 74, 99, 138, 153, 170, 180, 187, and 194 are non-dioxin-like polychlorinated biphenyls; their concentrations in serum are provided in ng/g in NHANES data. PCBs 126 and 169 are non-ortho PCBs; their serum concentrations are in fg/g. PCB 118, and seven furans and dioxins (1,2,3,6,7,8-hxcdd; 1,2,3,4,6,7,8-hpcdd; 1,2,3,4,6,7,8,9-ocdd; 2,3,4,7,8-pncdf; 1,2,3,4,7,8-hxcdf; 1,2,3,6,7,8-hxcdf; 1,2,3,4,6,7,8-hxcdf) form the final group; PCB118 is in ng/g, while the remainder are in fg/g. All POPs were logged as in \cite{mitro2015cross}. We use the same covariates as \cite{mitro2015cross}: age (years) and age$^2$, sex, race/ethnicity, educational attainment, BMI category, smoking, and blood cell counts and percentages. See \cite{mitro2015cross} for further detail on the data and covariates. To match the previous analyses, we linearly adjusted for covariates and included the 18 POPs in the nonlinear components of the GAM, BKMR, GKM, and PAGP models. 100,000 MCMC iterations were used for BKMR and 50,000 discarded as burn-in.

Results for tests of group significance using OLS, BKMR, and our PAGP tests are reported in Table \ref{tab:nhanes_2001_joint}. GAM joint tests were not used because they are not strongly recommended in \cite{wood2017generalized} and are not widely-used, to our knowledge. The PAGP identifies groups 2 and 3 as significant, while no other methods show clear evidence of joint association between any groups and ln(LTL).
\begin{table}[!htbp]
\caption{P-values for tests of joint association between the POP exposure groups and log mean telomere length, for 1,003 individuals in NHANES 2001-2002 data, adjusted for age (years) and age$^2$, sex, race/ethnicity, educational attainment, BMI category, smoking, and blood cell counts and percentages. Because our focus is on hypothesis testing in nonparametric models, we report p-values rather than effect estimates and confidence intervals. Exposures were natural-log transformed.}
    \centering
    \begin{tabular}{clcc}
          & PAGP (p-value)&  OLS (p-value)& BKMR (PIP)\\
          Group 1&0.600&  0.712& 0.342\\
          Group 2 &0.012&  0.141& 0.646\\
 Group 3& 0.044& 0.206&0.759\\
    \end{tabular}
    
    \label{tab:nhanes_2001_joint}
\end{table}

Table \ref{tab:nhanes_telo_full} shows p-values and PIPs for three POPs identified as associated with ln(LTL) by at least one method. Because two predictors (PCB 126 and 2,3,4,7,8-pncdf, also called Furan1 in \cite{mcgee2023bayesian}) were associated with ln(LTL) based on the PAGP tests, we tested for pairwise interaction using the PAGP tests between PCB 126 and Furan1 and we found no evidence of interaction (p-value 0.68). To interpret the results, a GAM including covariates and additive nonlinear functions of PCB126 and Furan1 was fit, with results shown in Figure \ref{fig:telo}. PAGP and GKM yield similar results, with the different that PAGP can test for interaction. OLS and GAM found that PCB180 instead of PCB126 was associated with ln(LTL). BKMR found the PIP of Furan1 to be 0.79, but it is less clear how to interpret this PIP compared to p-values of other methods.

\begin{table}[!htbp]
\centering
\caption{Variable importance results (p-values for frequentist methods, PIPs for BKMR) for all exposures (log POP concentrations) in the NHANES 2001-2002 dataset, for 1,003 individuals in NHANES 2001-2002 data, adjusted for age (years) and age$^2$, sex, race/ethnicity, educational attainment, BMI category, smoking, and blood cell counts and percentages. Because our focus is on hypothesis testing in nonparametric models, we report p-values rather than effect estimates and confidence intervals. Exposures were natural-log transformed.}
\label{tab:nhanes_telo_full}
\begin{tabular}{lrrrrr}
\toprule
Exposure & \makecell[r]{OLS \\ (p-value)} & \makecell[r]{GAM \\ (p-value)} & \makecell[r]{GKM \\ (p-value)} & \makecell[r]{PAGP \\ (p-value)} & \makecell[r]{BKMR \\ (PIP)} \\
\midrule
log(PCB 74) & 0.795 & 0.697 & 0.741 & 0.858 & 0.125 \\
log(PCB 99) & 0.894 & 0.611 & 0.336 & 0.337 & 0.154 \\
log(PCB 138) & 0.556 & 0.578 & 0.692 & 0.543 & 0.151 \\
log(PCB 153) & 0.269 & 0.349 & 0.512 & 0.507 & 0.148 \\
log(PCB 170) & 0.654 & 0.507 & 0.843 & 0.437 & 0.161 \\
log(PCB 180) & 0.059 & 0.037 & 0.199 & 0.257 & 0.157 \\
log(PCB 187) & 0.995 & 1.000 & 0.901 & 0.729 & 0.125 \\
log(PCB 194) & 0.858 & 0.884 & 0.554 & 0.543 & 0.128 \\
log(PCB 126) & 0.229 & 0.201 & 0.030 & 0.029 & 0.311 \\
log(PCB 169) & 0.209 & 0.154 & 0.124 & 0.140 & 0.198 \\
log(PCB 118) & 0.478 & 0.441 & 0.107 & 0.234 & 0.243 \\
log(1,2,3,6,7,8-hxcdd) & 0.362 & 0.292 & 0.562 & 0.482 & 0.132 \\
log(1,2,3,4,6,7,8-hpcdd) & 0.682 & 0.605 & 0.955 & 0.748 & 0.107 \\
log(1,2,3,4,6,7,8,9-ocdd) & 0.183 & 0.315 & 0.226 & 0.269 & 0.114 \\
log(2,3,4,7,8-pncdf) & 0.022 & 0.028 & 0.004 & 0.002 & 0.790 \\
log(1,2,3,4,7,8-hxcdf) & 0.574 & 0.935 & 0.553 & 0.604 & 0.105 \\
log(1,2,3,6,7,8-hxcdf) & 0.322 & 0.085 & 0.863 & 0.381 & 0.140 \\
log(1,2,3,4,6,7,8-hxcdf) & 0.127 & 0.120 & 0.218 & 0.281 & 0.121 \\
\bottomrule
\end{tabular}
\end{table}
    
    

\begin{figure}[H]
    \centering
    
    \includegraphics[width=1\linewidth]{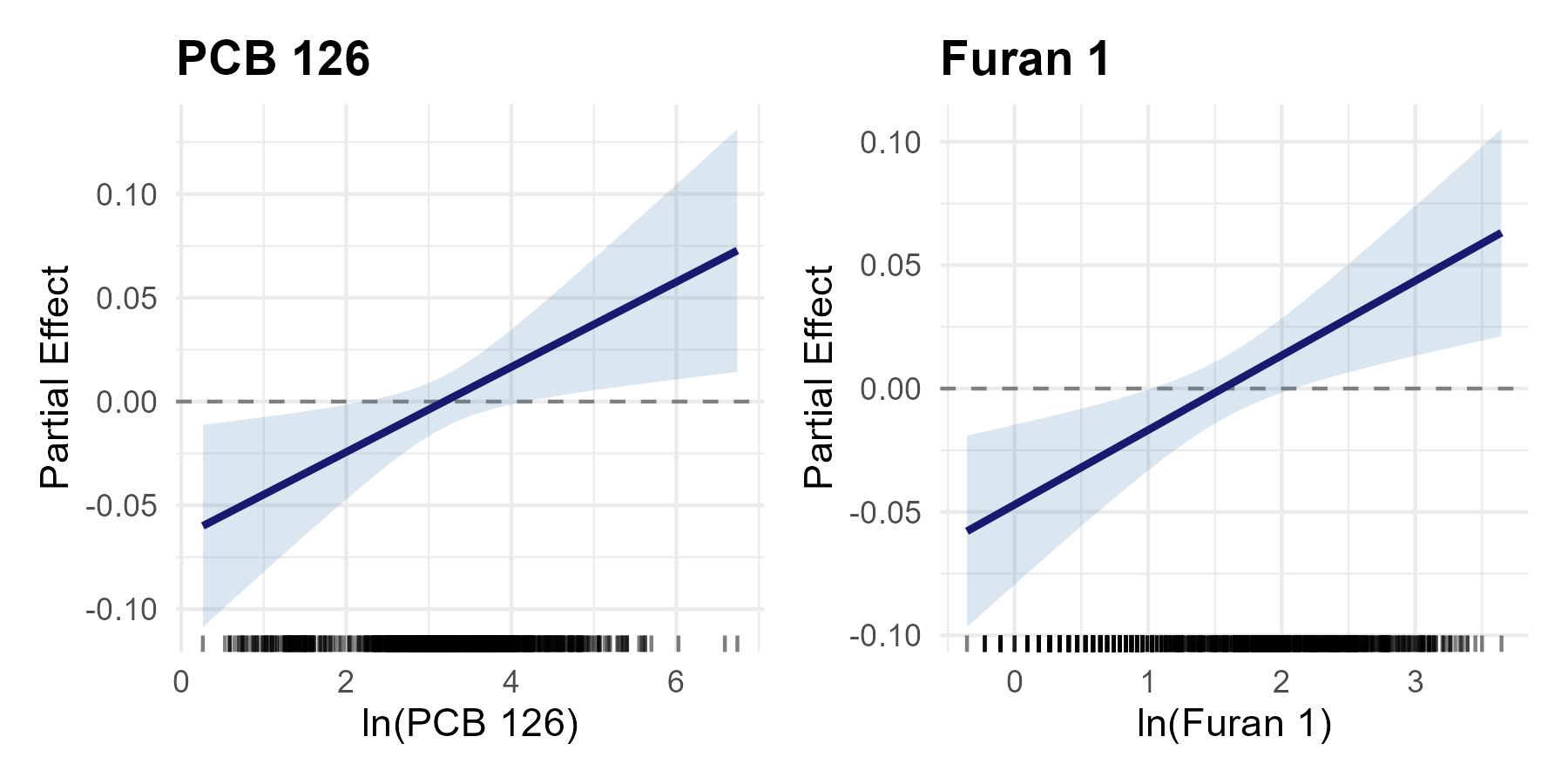}
    \caption{Main effects estimated by GAM smooth terms, using the significant main effects of PCB126 and Furan1 (2,3,4,7,8-pncdf)identified by the PAGP hypothesis tests, for the NHANES 2001-2002 data. The partial effects characterize associations found between natural log-transformed PCB in blood and log(LTL) in blood.}
    \label{fig:telo}
\end{figure}

\section{Additional data analysis results}\label{s:supp_addl_results}
\subsection{NCWELL Data}\label{s:ncwell_addl}

\begin{table}[!htbp]
\centering
\caption{NCWELL Continuous Variable Summary}
\label{tab:ncwell_cont_summary}
\begin{tabular}{lrrr}
\toprule
Characteristic & Mean & Median & SD \\
\midrule
Arsenic (mean, ppb) & 3.19 & 2.91 & 2.01 \\
Barium (mean, ppb) & 84.39 & 70.71 & 40.53 \\
Calcium (mean, ppb) & 24,660.78 & 18,005.06 & 20,278.43 \\
Chromium (mean, ppb)& 7.35 & 7.07 & 1.64 \\
Copper (mean, ppb) & 65.66 & 40.30 & 123.63 \\
Iron (mean, ppb)& 1,945.59 & 452.25 & 51,157.29 \\
Lead (mean, ppb)& 6.37 & 4.08 & 23.67 \\
Magnesium (mean, ppb)& 5,447.32 & 3,970.71 & 5,985.19 \\
Manganese (mean, ppb)& 69.20 & 44.04 & 76.25 \\
Sodium (mean, ppb) & 21,654.68 & 11,169.23 & 30,845.61 \\
Zinc (mean, ppb) & 1,105.54 & 123.03 & 32,038.27 \\
\hline
Age (median, years) & 41.63 & 41.50 & 7.10 \\
Black Pop. Pct. & 0.20 & 0.14 & 0.19 \\
Female Pop. Pct. & 0.51 & 0.51 & 0.04 \\
Hisp Pop. Pct. & 0.08 & 0.06 & 0.08 \\
Income (median, $1,000)$& 58.6 & 52.5 & 25.9 \\
Population & 4,917.26 & 4,616.00 & 2,146.73 \\
Obesity Pct. & 34.19 & 34.30 & 5.18 \\
Well Water Users, Pct. & 43.53 & 38.80 & 35.43 \\
\bottomrule
\end{tabular}
\end{table}

Significant interactions according to the PAGP tests are displayed in Table \ref{tab:ncwell_interactions}.

\begin{table}[!htbp]
\centering
\caption{Interaction effect hypothesis test p-values for significant (p $<$ 0.05) interactions based on the PAGP interaction tests, for pairs of exposures and covariates identified as significant by the PAGP in the NCWELL dataset. Non-significant interactions are omitted because of the large number tested. Because our focus is on hypothesis testing in nonparametric models, we report p-values rather than effect estimates and confidence intervals. Demographic effects reflect the small area estimation methodology so are not otherwise discussed in our results.}
\label{tab:ncwell_interactions}
\begin{tabular}{lr}
\toprule
Interaction & p-value \\
\midrule
Age $\times$ Median Income& 0.017\\
Age $\times$ Hispanic Pct & $<$ 0.001 \\
Age $\times$ Well Use Pct & 0.041\\
Median Income $\times$ Black Pct & $<$ 0.001 \\
Median Income $\times$ Hispanic Pct & 0.014\\
Median Income $\times$ Arsenic & 0.035\\
Black Pct $\times$ Hispanic Pct& 0.046\\
Arsenic $\times$ Lead& 0.044\\
\bottomrule
\end{tabular}
\end{table}

The GAM hypothesis test output is in Table \ref{tab:ncwell_gam}.

\begin{table}[!htbp]
\centering
\caption{Main and interaction effect hypothesis test p-values from the secondary GAM used to interpret findings from the PAGP hypothesis test results, for the NCWELL data. The GAM included tensor interaction (ti) terms for main and interaction effects identified as significant by the PAGP. Because our focus is on hypothesis testing in nonparametric models, we report p-values rather than effect estimates and confidence intervals. Demographic effects reflect the small area estimation methodology so are not otherwise discussed in our results.}
\label{tab:ncwell_gam}
\begin{tabular}{lr}
\toprule
Term & p-value \\
\midrule
Latitude & $<$ 0.001 \\
Longitude & $<$ 0.001 \\
Age & $<$ 0.001 \\
Median Income & $<$ 0.001 \\
Black Pct & $<$ 0.001 \\
Hispanic Pct & $<$ 0.001 \\
Well Use Pct & $<$ 0.001 \\
Arsenic & 0.027\\
Lead &  0.023\\
Latitude $\times$ Longitude & $<$ 0.001\\
Arsenic $\times$ Lead & 0.241\\
Age $\times$ Hispanic Pct & $<$ 0.001 \\
Age $\times$ Well Use Pct & 0.651\\
Median Income $\times$ Black Pct & $<$ 0.001 \\
Median Income $\times$ Hispanic Pct & 0.352\\
Median Income $\times$ Arsenic & 0.088\\
Black Pct $\times$ Hispanic Pct& $<$0.001\\
\end{tabular}
\end{table}

The partial interaction effect between arsenic and median income is displayed in Figure \ref{fig:as_income_2d}.
\begin{figure}[H]

    \centering
    \includegraphics[width=0.95\linewidth]{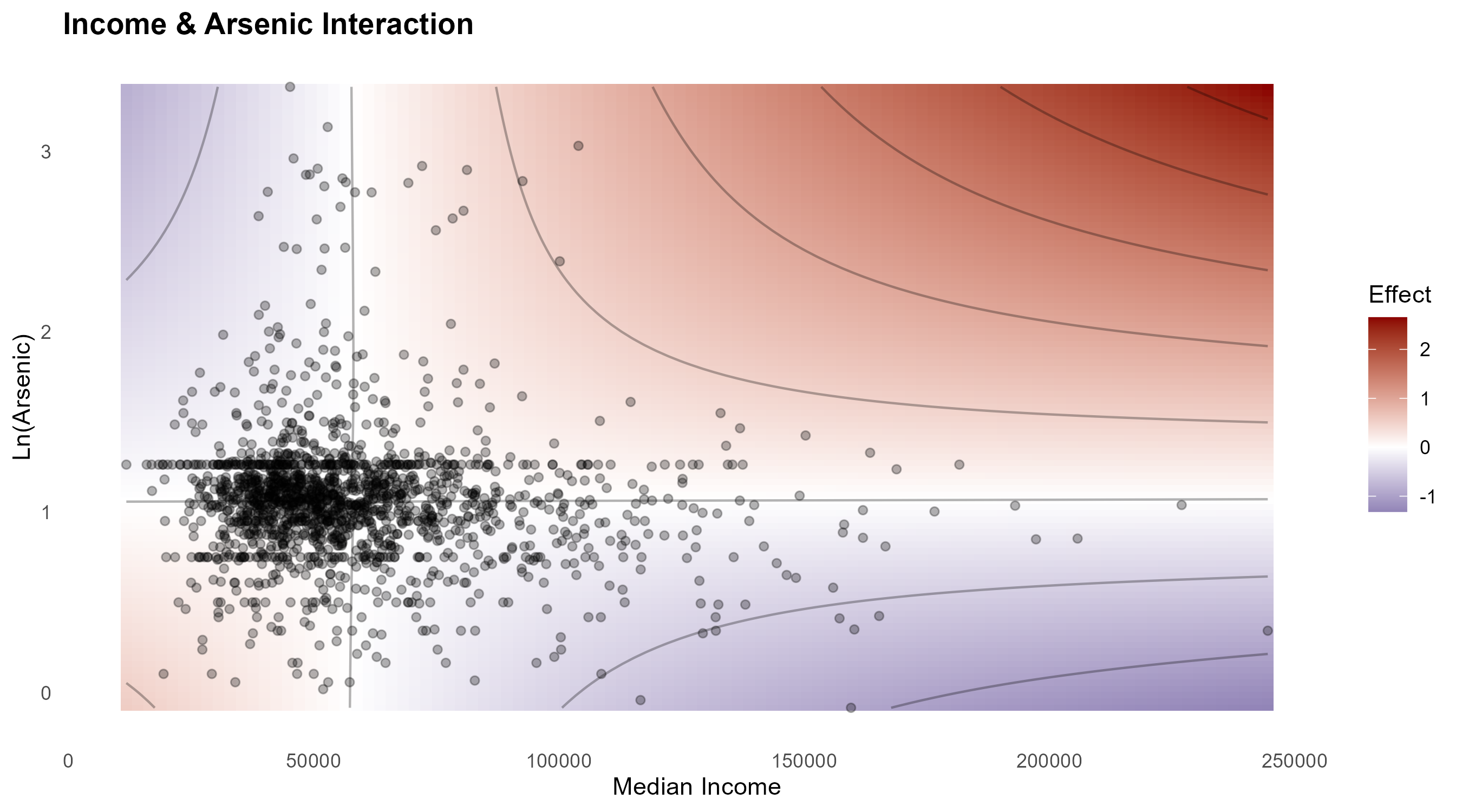}
    \caption{Partial interaction effect between arsenic and median income estimated by a GAM tensor interaction smooth term, using main and interaction effects identified by the PAGP hypothesis tests plus all covariates, for the NCWELL data. The partial effects characterize associations found between natural log-transformed metals in private wells, averaged by census tract, and small area estimates of obesity prevalence.}
        \label{fig:as_income_2d}

\end{figure}

Estimated main effects for several demographic covariates are displayed in Figure \ref{fig:ncwell_demo}, demonstrating highly nonlinear associations with small area estimates, necessitating a nonlinear model.

\begin{figure}[H]

    \centering
    \includegraphics[width=0.5\linewidth]{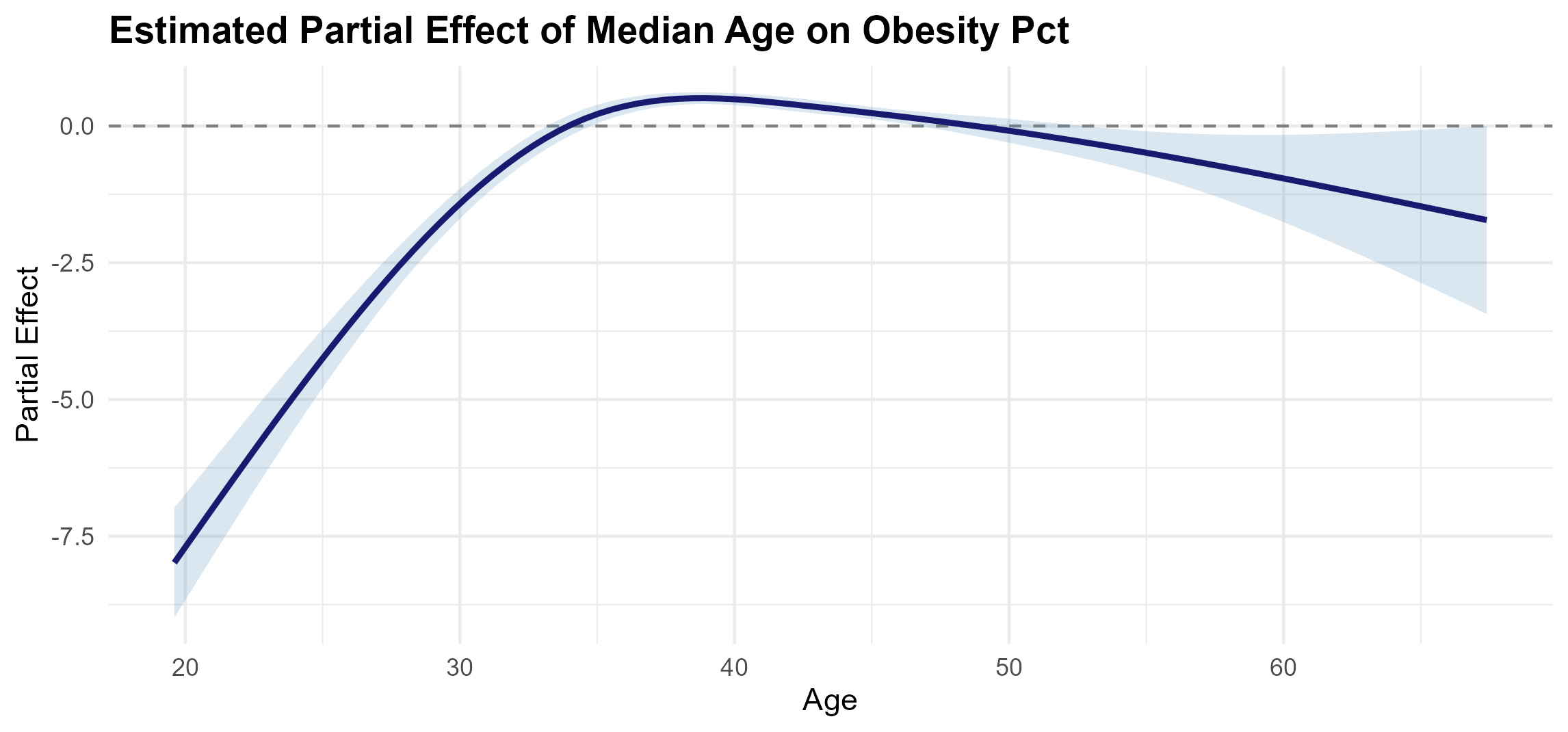}
    \includegraphics[width=0.5\linewidth]{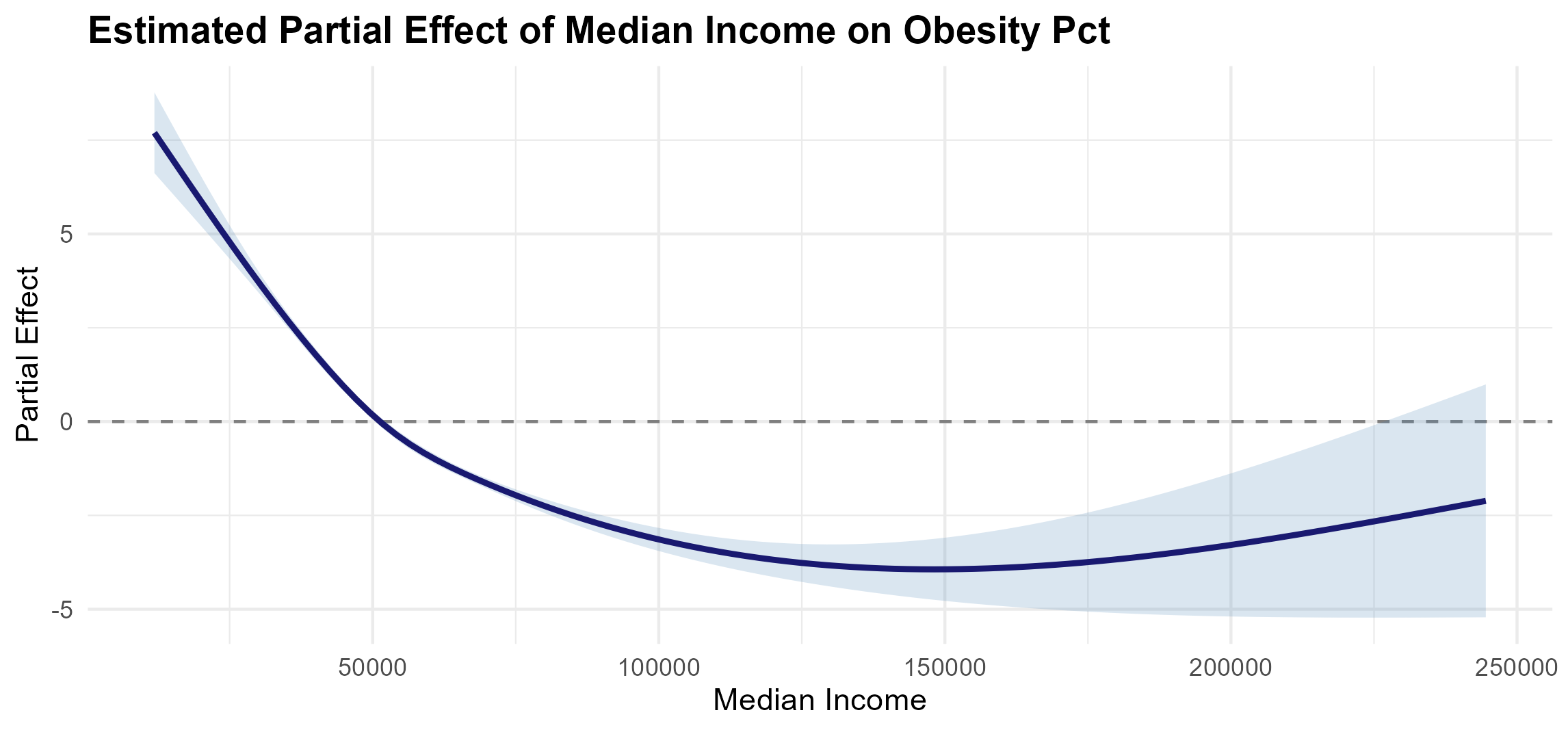}
    \caption{Partial main effects of median age and median income estimated by a GAM tensor interaction smooth term, using main and interaction effects identified by the PAGP hypothesis tests plus all covariates, for the NCWELL data. These partial effects characterize associations between demographic factors and small area estimates of obesity prevalence; their high nonlinearity and significance is due to the small area estimates' methodology relying heavily on demographic factors.}
        \label{fig:ncwell_demo}

\end{figure}

\subsection{NHANES 2017-2020 liver data}\label{s:nhanes_addl}

\begin{table}[!htbp]
\caption{NHANES 2017-2020 Liver Data, Categorical Variable Summary}
\centering
\label{tab:cat_summary}
\begin{tabular}{lrr}
\toprule
 & Count & Percent (\%) \\
\hspace{1em} Non-alcohol user & 1,159 & 46.2\% \\
\hspace{1em} Alcohol User & 1,348 & 53.8\% \\
\hline
\hspace{1em} No diabetes & 2,061 & 82.2\% \\
\hspace{1em} Diabetes & 446 & 17.8\% \\
\hline
\hspace{1em} No hypertension & 1,421 & 56.7\% \\
\hspace{1em} Hypertension & 1,086 & 43.3\% \\
\hline
\hspace{1em} Female & 1,032 & 41.2\% \\
\hspace{1em} Male & 1,475 & 58.8\% \\
\hline
\hspace{1em} Mexican American & 251 & 10.0\% \\
\hspace{1em} Non-Hispanic Asian & 126 & 5.0\% \\
\hspace{1em} Non-Hispanic Black & 595 & 23.7\% \\
\hspace{1em} Non-Hispanic White & 1,164 & 46.4\% \\
\hspace{1em} Other Hispanic & 198 & 7.9\% \\
\hspace{1em} Other Race - Including Multi-Racial & 173 & 6.9\% \\
\hline
\hspace{1em} Non-smoker & 1,459 & 58.2\% \\
\hspace{1em} Smoker & 1,048 & 41.8\% \\
\bottomrule
\end{tabular}
\end{table}

\begin{table}[!htbp]
\centering
\caption{NHANES 2017-2020 Liver Data, Continuous Variable Summary}
\label{tab:cont_summary}
\begin{tabular}{lrrr}
\toprule
 & Mean & Median & SD \\
\midrule
ALT (U/L) & 22.57 & 18.00 & 17.58 \\
Lead ($\mu$g/dL) & 1.35 & 1.05 & 1.19 \\
Manganese ($\mu$g/L) & 9.37 & 8.77 & 3.54 \\
Mercury ($\mu$g/L) & 1.18 & 0.62 & 1.86 \\
Cadmium ($\mu$g/L) & 0.70 & 0.45 & 0.74 \\
Selenium ($\mu$g/L) & 186.08 & 183.83 & 27.13 \\
Age (years) & 53.24 & 55.00 & 16.67 \\
BMI & 30.26 & 29.20 & 7.48 \\
Energy Intake (kcal) & 2,234.09 & 2,045.00 & 1,101.51 \\
HDL-C (mg/dL) & 52.60 & 50.00 & 16.92 \\
Poverty Ratio & 2.40 & 2.00 & 1.56 \\
Total Cholesterol (mg/dL) & 185.89 & 182.00 & 42.80 \\
\bottomrule
\end{tabular}
\end{table}


\begin{table}[!htbp]
\centering
\caption{Main and interaction effect hypothesis test p-values from the secondary GAM used to interpret findings from the PAGP hypothesis test results, for the NHANES 2017-2020 liver data. The GAM included tensor interaction (ti) terms for main and interaction effects identified as significant by the PAGP, plus main effects for remaining covariates. Because our focus is on hypothesis testing in nonparametric models, we report p-values rather than effect estimates and confidence intervals. Demographic effects reflect the small area estimation methodology so are not otherwise discussed in our results.}
\label{tab:gam_liver_data}
\begin{tabular}{lr}
\toprule
Variable & p-value \\
\midrule
Manganese & 0.267\\
Lead & 0.008\\
Cadmium & 0.112\\
Selenium & 0.002\\
Age & $<$ 0.001\\
HDL-C & $<$ 0.001\\
Total Cholesterol &0.007\\
Manganese $\times$ Male & 0.006\\
Manganese $\times$ HDL-C & 0.006\\
Lead $\times$ HDL-C & 0.145\\
Manganese $\times$ Hypertension & 0.184\\
Cadmium $\times$ HDL-C&  0.008\\
Selenium $\times$ HDL-C& 0.232\\
Age $\times$ Male & $<$ 0.001\\
Age $\times$ HDL-C & 0.013\\
Age $\times$ Total Cholesterol & $<$ 0.001\\
Age $\times$ Hypertension & 0.003\\
\bottomrule
\end{tabular}
\end{table}

\end{document}